\documentclass[a4paper, 11pt]{article}
\usepackage{srcltx}
\usepackage{indentfirst}
\usepackage{graphicx}
\usepackage{overpic}
\usepackage{multirow}
\usepackage[dvipdfmx]{hyperref}
\usepackage[hang]{subfigure}
\usepackage{amsmath, amssymb, amsthm}
\usepackage{color}
\usepackage[mathscr]{euscript}

\newcommand\bbR{\mathbb{R}}
\newcommand\bbN{\mathbb{N}}
\newcommand\bxi{\boldsymbol{\xi}}
\newcommand\bx{\boldsymbol{x}}
\newcommand\bv{\boldsymbol{v}}
\newcommand\bu{\boldsymbol{u}}
\newcommand\bw{\boldsymbol{w}}

\newcommand\bF{\boldsymbol{F}}

\newcommand\dd{\,\mathrm{d}}
\newcommand\He{\mathit{He}}

\newcommand\NRxx{{NR{$xx$}}}
\newtheorem{theorem}{Theorem}

\newcommand\pd[2]{\dfrac{\partial {#1}}{\partial {#2}}}

\numberwithin{equation}{section}

\setlength{\oddsidemargin}{0cm} 
\setlength{\evensidemargin}{0cm}
\setlength{\textwidth}{150mm}
\setlength{\textheight}{230mm}

\graphicspath{{images/}}

{\theoremstyle{remark} \newtheorem{remark}{Remark}}
\newtheorem{proposition}{Proposition}

\title{Solving Vlasov Equations Using {\NRxx} Method}

\author{Zhenning Cai\thanks{School of Mathematical Sciences, Peking
    University, Beijing, China, email: {\tt caizn@pku.edu.cn}.},~~ Ruo
  Li\thanks{HEDPS \& CAPT, LMAM \& School of Mathematical Sciences,
    Peking University, Beijing, China, email: {\tt
      rli@math.pku.edu.cn}.}~~ and Yanli Wang\thanks{CAPT, Beijing
    International Center for Mathematical Research, \& School of
    Mathematical Sciences, Peking University, Beijing, China, email:
    {\tt wangyanliwyl@gmail.com}.}}

\begin{document}
\maketitle
\begin{abstract}

  In this paper, we propose a moment method to numerically solve the
  Vlasov equations using the framework of the {\NRxx} method developed
  in \cite{NRxx,Cai,Li} for the Boltzmann equation. Due to the same
  convection term of the Boltzmann equation and the Vlasov equation,
  it is very convenient to use the moment expansion in the {\NRxx}
  method to approximate the distribution function in the Vlasov
  equations. The moment closure recently presented in \cite{Fan} is
  applied to achieve the globally hyperbolicity so that the local
  well-posedness of the moment system is attained. This makes our
  simulations using high order moment expansion accessible in the case
  of the distribution far away from the equilibrium which appears very
  often in the solution of the Vlasov equations. With the moment
  expansion of the distribution function, the acceleration in the
  velocity space results in an ordinary differential system of the
  macroscopic velocity, thus is easy to be handled. The numerical
  method we developed can keep both the mass and the momentum
  conserved. We carry out the simulations of both the Vlasov-Poisson
  equations and the Vlasov-Poisson-BGK equations to study the linear
  Landau damping. The numerical convergence is exhibited in terms of
  the moment number and the spatial grid size, respectively. The
  variation of discretized energy as well as the dependence of the
  recurrence time on moment order is investigated. The linear Landau
  damping is well captured for different wave numbers and collision
  frequencies. We find that the Landau damping rate linearly and
  monotonically converges in the spatial grid size. The results are in
  perfect agreement with the theoretic data in the collisionless case.

\vspace*{4mm}
\noindent {\bf Keywords:} Vlasov equations; {\NRxx} method; Landau
damping
\end{abstract}

\section{Introduction} \label{sec:intro} 

The Vlasov equation is a differential equation describing the time
evolution of the distribution function of plasma consisting of charged
particles with the long-range (for example, Coulomb) interaction. The
equation was first suggested for the description of plasma by A. Vlasov in
1938 \cite{Vlasov}. Due to the presence  of long-range Coulomb
interaction in the plasma, Vlasov started from the Vlasov equation,
which is a collisionless Boltzmann equation, and used a
self-consistent collective field created by the charged plasma particles
to get the Vlasov-Maxwell equations \cite{Henon}. The Vlasov-Poisson
(V-P) equations are an approximation of the Vlasov-Maxwell equations
in the nonrelativistic zero-magnetic field limit. The V-P equations
are used to describe various phenomena in plasma, in particular Landau
damping and the distributions in a double layer plasma. The
Vlasov-Pission-BGK (V-B) equations are also studied with a collision
term presented as the BGK term. They are the simplest kinetic
equations which correctly describe the essential features of collective
and dissipative (entropy-producing) particle interactions in
semiconductor plasmas \cite{BGK, BGK_1, Vojta, Mocker}.

Due to the complex phenomena in the plasma, numerical simulation plays
an important role in the study of the Vlasov and related
equations. There are several kinds of methods to solve the Vlasov
equations. The finite element method was proposed in
\cite{Finite-element1, Finite-element2}, which can be used to handle
complicated boundary problems but inconvenient to solve the high
dimension equations \cite{Filbet}. Meanwhile, the particle-in-cell
(PIC) method \cite{PIC} which used a finite number of macro-particles
to approximate the plasma is easy to be implemented, and the method to
discretize the Vlasov equations on a mesh of phase space was
introduced to remedy the problem in PIC that the inherent artificial
discrete particle noise made the description inaccuracy. The
semi-Lagrangian method \cite{SL} and the cubic interpolated
propagation method \cite{CIP} were also used to solve the Vlasov
equations. However, the first method destroyed the local characters
due to the reconstruction and the second one was quite expensive for
the storage of the distribution function and its derivatives. In
\cite{Filbet}, Filbet put forward a new method to deal with the force
term, which made the scheme  keep the mass and energy conserved.
Recently, an approach based on the moment method has been proposed in
\cite{UTH}, and therein the distribution function was expanded using
the Hermite polynomials with a prescribed macroscopic velocity chosen
as the expansion center and a prescribed thermal velocity as the
scaling factor at different locations.

In the past years, a regularized moment method was developed in
\cite{NRxx} to numerically solve the Boltzmann equation. This method
adopts the Hermite polynomial expansion to approximate the
distribution function, with the basis function shifted by the local
macroscopic velocity and scaled by the square root of the local
temperature. The approximated distribution function is used to
directly solve the Boltzmann equation without the deduction of the
moment system up to arbitrary order of moments. The method therein was
further explored as the {\NRxx} method \cite{Cai, Li} by introducing
the regularization term using asymptotic expansion in term of the mean
free path. Recently, a new regularization method \cite{Fan, Fan_new}
was derived with the guarantee that the regularized moment system is
globally hyperbolic. Due to the locally well-posedness provided by the
global hyperbolicity, it is eventually accessible that approximating
the distribution function far away from the equilibrium distribution
by the stable simulation using large number of moments. Inspired by
this progress, we in this paper develop the {\NRxx} method to study
the Vlasov equations, which is similar to the Boltzmann equation in
the convection term, while the distribution function is much farther
away from the equilibrium state than the gas flows, due to the long-range
Coulomb interactions.

Here we are focusing on the V-P and V-B equations. With the moment
expansion in the Vlasov equations, the convection part is smoothly
handled by the original {\NRxx} method for the Boltzmann equation. We
discretize the regularized term given in \cite{Fan_new} directly using
the central difference scheme. Our deduction shows that the electric
potential brings us an acceleration on the macroscopic velocity, thus
it is very convenient to be numerically integrated. Currently, the
collision term under our consideration is the simple BGK model to
avoid distraction. The approximated electric potential is obtained
by a three-point central scheme, and the numerical method keeps both
the mass and the momentum conserved. By the numerical resolution
study, it is exhibited that our method is numerically converged by the
comparison of Landau damping rates obtained using different spatial
grid size and number of moments. With the increasing of the number of
moments, the recurrence time is almost linearly related to the
square root of the moment expansion order. The discretized energy of
our method only varies slightly in comparison with the overall energy.
Since the Landau damping rate is affected by the wave number and the
collision frequency, different wave numbers and collision frequencies
are extensively studied. The wave numbers ranging from 0.2 to 0.5 are
simulated, and the collision frequencies are taken as 0.0 (the
collisionless case), 0.01 and 0.05. The results show that our method
can capture the linear Landau damping very well and the damping rates
obtained is in quantitative agreement with the theoretic data. We find
with surprise that the numerical Landau damping rate is linearly and
monotonically converged in term of the spatial grid size. Particularly,
the numerical damping rate converges to the theoretic data perfectly
in the collisionless case. This observation inspires us to predict the
Landau damping rates by the extrapolation of our numerical damping rates
with presence of the collision term.

The layout of this paper is as follows: in Section 2, the regularized
moment system is deduced for the Vlasov-BGK equations. In Section 3, we
present the detailed procedure of the numerical method. In Section 4,
the numerical examples including the numerical resolution study and
the linear Landau damping simulation with different parameters are
presented. Some concluding remarks are given in the last section.



\section{Regularized Moment System}
Let $f(t,\bx,\bv)$, which depends on time $t$, position $\bx \in
\Omega \subset \bbR^3$ and the microscopic velocity $\bv \in \bbR^3$,
be the distribution function depicting the motion of particles. It is
governed by the V-B equations
\begin{equation}
  \label{eq:vlasov}
  \pd{f}{t} + \bv \cdot \nabla_{\bx} f + \bF(t,\bx,\bv) \cdot \nabla_{\bv} f
  = \nu(f_M-f),
\end{equation}
where $\nu(t,\bx)$ denotes the collision frequency,
and $\bF(t,\bx,\bv)$ is the electric force produced by the
self-consistent electric filed $\boldsymbol{E}(t,\bx)$:
\begin{equation}
\label{eq:force}
 \bF(t,\bx,\bv) = \frac{q}{m}\boldsymbol{E}(t,\bx), \quad \boldsymbol{E}(t,\bx) = -\nabla_{\bx}
  ~\phi(t,\bx),\quad  -\Delta_{\bx} \phi = q\frac{ \rho}{\epsilon_0},
\end{equation}
where $\phi(t,\bx)$ is the electric potential produced by the
particles; $\rho$, $q$, $m$ and $\epsilon_0$ stand for the density,
the single charge, the mass of one particle and the electric constant
respectively; $f_M$ is the Maxwellian defined as
\begin{equation}
  \label{eq:maxwellian}
  f_M =  \frac{\rho(t,\bx)}{(2\pi u_{th}(t,\bx))^{3/2}}
\exp\left(-\frac{|\bv-\bu(t,\bx)|^2}{2u_{th}(t,\bx)}\right),
\end{equation}
where the parameter $u_{th}(t,\bx)$ is the thermal velocity
\cite{UTH}, $\bu$ is the macroscopic velocity and $\rho(t,\bx)$ is
the same as that in \eqref{eq:force}. $f_M$ is related to $f$ by
\begin{equation}
  \label{eq:relationship}
\int_{\bbR^3} f(\bv)\left(
  \begin{array}{c}
    1 \\ \bv \\ |\bv|^2
  \end{array}\right)\mathrm{d}\bv = 
 \int_{\bbR^3} f_M(\bv)\left(
  \begin{array}{c}
    1 \\ \bv \\ |\bv|^2
  \end{array}\right)\mathrm{d}\bv.
\end{equation}
In the case of $\nu = 0$, we get the V-P equations. The relations
between the macroscopic variables and the distribution function are
deduced as
\begin{gather}
\label{eq:rho}
\rho(t,\bx) =  \int_{\bbR^3}f(t,\bx,\bv)\mathrm{d}\bv, \\
\label{eq:pri}
\rho(t,\bx)\bu(t,\bx) =
\int_{\bbR^3}\bv f(t,\bx,\bv)\mathrm{d}\bv, \\
  \label{eq:energy_1}
  \rho(t,\bx)|\bu(t,\bx)|^2 + 3\rho(t,\bx) u_{th}(t,\bx) 
=  \int_{\bbR^3}|\bv|^2 f(t,\bx,\bv)\mathrm{d}\bv. 
\end{gather}

The conservation of mass, momentum and total energy are all valid for
the V-B and V-P equations. Multiplying the equation \eqref{eq:vlasov} by
$1$ and $\bv$, direct integration with $\bv$ and $\bx$ gives us
\begin{gather}
  \label{eq:mass_conservtion}
  \frac{\mathrm{d}}{\mathrm{d}t}\int_{\bbR^3\times
    \bbR^3}f(t,\bx,\bv)\mathrm{d}\bx \mathrm{d}\bv = 0,
  \quad  t
  \in \bbR^+, \\
\label{eq:momentun_conservation}
 \frac{\mathrm{d}}{\mathrm{d}t}\int_{\bbR^3\times
    \bbR^3}\bv f(t,\bx,\bv)\mathrm{d}\bx \mathrm{d}\bv = 0,
  \quad  t
  \in \bbR^+.
\end{gather} 
Multiplying the equation \eqref{eq:vlasov} by $|\bv|^2$ and
integrating by parts, we get the conservation of the total energy for
the system \eqref{eq:vlasov} and \eqref{eq:force}:
\begin{equation}
  \label{eq:energy_conversvation}
  \frac{\mathrm{d}}{\mathrm{d}t}\left(\int_{\bbR^3\times
    \bbR^3}f(t,\bx,\bv)|\bv|^2\mathrm{d}\bx\mathrm{d}\bv +
  \int_{\bbR^3}|\boldsymbol{E}(t,\bx)|^2\mathrm{d}\bx \right) = 0, \quad  t
  \in \bbR^+.
\end{equation}

\subsection{Hermite expansion of the distribution function} 
\label{sec:discretization}
Following the method in \cite{NRxx, NRxx_new}, we expand the
distribution function into Hermite series as
\begin{equation}
  \label{eq:expansion}
  f(\bv) = \sum_{\alpha \in \bbN^3}f_{\alpha}\mathcal{H}_{u_{th},\alpha}(\bxi),
\end{equation}
where $\alpha=(\alpha_1,\alpha_2,\alpha_3)$ is a three-dimensional
multi-index, and 
\begin{equation}
  \label{eq:scaling}
  \bxi = \frac{\bv-\bu}{\sqrt{u_{th}}}.
\end{equation}
The basis functions $\mathcal{H}_{u_{th},\alpha}$ are defined
as 
\begin{equation}
  \label{eq:base}
  \mathcal{H}_{u_{th},\alpha}(\bxi) =\prod\limits_{d=1}^3
 \frac{1}{\sqrt{2\pi}}u_{th}^{-\frac{\alpha_d+1}{2}}
\He_{\alpha_d}(\xi_d)\exp \left(-\frac{\xi_d^2}{2} \right),
\end{equation}
where $\He_{\alpha_d}$ is the Hermite polynomial
\begin{equation}
  \label{eq:hermit}
  \He_n(x) = (-1)^n\exp \left( \frac{x^2}{2} \right) \frac{\dd^n}{\dd x^n} \exp \left(-\frac{x^2}{2} \right).
\end{equation}
For convenience, $\He_{n}(x)$ is taken as zero if $n<0$, thus
$\mathcal{H}_{u_{th},\alpha}(\bxi)$ is zero when any 
component of $\alpha$ is negative.  With such an expansion, the
Maxwellian $f_{M}$  can be written as  
 \begin{equation}
    f_M(\bv) = \rho\mathcal{H}_{u_{th},0}(\bxi),
  \end{equation}  
and the BGK collision term is written as
\begin{equation}
  \label{eq:collision}
  \nu(f_M-f) = -\nu\sum_{|\alpha| \geq 1}f_{\alpha}\mathcal{H}_{u_{th},\alpha}(\bxi).
\end{equation}

The definition of $\bxi$ shows that each basis function is an
exponentially decreasing function multiplied by a multi-dimensional
Hermite polynomial shifted by the local macroscopic velocity $\bu$ and
scaled by the square root of the local thermal velocity $u_{th}$. For
any vector $\bu'$ and positive number $u_{th}'$, if the distribution
function $f$ is expanded as
\begin{equation}
\label{eq:expansion_2}
f(\bv) = \sum\limits_{\alpha \in \bbN^3 }f_{\alpha}'\mathcal{H}_{u_{th}',\alpha}(\bxi'), \quad
\bxi' = \frac{\bv-\bu'}{\sqrt{u_{th}'}},
\end{equation}
then the following relations hold
\begin{subequations}
\label{eq:similar_relation} 
 \begin{align}
    \rho &=  f_0', \\
\rho \bu &=  \rho \bu' + (f_{e_d}')_{d=1,2,3}^T, \\
\rho|\bu|^2 + 3\rho u_{th} &=  2\rho \bu \cdot \bu' - \rho|\bu'|^2 +
\sum\limits_{d=1}^3(u_{th}'f_0' + 2f_{2e_d}').
\end{align}
\end{subequations}
In the case of $\bu' = \bu$ and $u_{th}' = u_{th}$ in
\eqref{eq:similar_relation}, the following relations between the
coefficients $f_{\alpha}$ can be verified:
\begin{equation}
\label{eq:estimate}
  f_0 = \rho(t,\bx), \quad f_{e_i} = 0, \quad
  \sum\limits_{d=1}^3 f_{2e_d} = 0, \quad i =1, 2 , 3.
\end{equation} 
Moreover, direct calculations give us the relations between the
coefficients  $f_{\alpha}$ in \eqref{eq:expansion} as
\begin{gather}
  \label{eq:q_p}
q_{i} = 2f_{3e_i} + \sum\limits_{d=1}^3f_{2e_d+e_i}, \\
p_{ij}  -\frac{1}{3}\delta_{ij}\sum\limits_{d=1}^3p_{dd} =
(1+\delta_{ij})f_{e_i+e_j}.
\end{gather} 
where $i,j = 1,2,3$, $q_i$ and $p_{ij}$ are related to $f$
by
\begin{gather}
  q_i  = \frac{1}{2}\int_{\bbR^3}|\bv - \bu|^2(v_i - u_i)f \mathrm{d}\bv,\\
p_{i,j} = \int_{\bbR^3}(v_i - u_i)(v_j - u_j)f
\mathrm{d}\bv.
\end{gather}

\subsection{Moment expansion of Vlasov equation}
\label{moment_equations}
To get the moment system, we substitute the expansion
\eqref{eq:expansion} into \eqref{eq:vlasov} and then match the
coefficients for the same basis functions. Taking the temporal and
spatial derivatives directly on the basis functions
$\mathcal{H}_{u_{th},\alpha}$, the term with the expansion
\eqref{eq:expansion}
\[
\pd{f}{t} + \bv \cdot \nabla_{\bx} f
\]
is expanded as
\begin{equation}
\begin{split}
&~~\sum_{\alpha \in \bbN^3} \Bigg\{ \left(
    \frac{\partial f_{\alpha}}{\partial t} +
    \sum_{d=1}^3 \frac{\partial u_d}{\partial t} f_{\alpha-e_d} +
    \frac{1}{2} \frac{\partial u_{th}}{\partial t}
      \sum_{d=1}^3 f_{\alpha-2e_d}
  \right) \\
& \qquad + \sum_{j=1}^3 \Bigg[ \left(
    u_{th} \frac{\partial f_{\alpha - e_j}}{\partial x_j} +
    u_j \frac{\partial f_{\alpha}}{\partial x_j} +
    (\alpha_j + 1) \frac{\partial f_{\alpha+e_j}}{\partial x_j}
  \right) \\
& \qquad \qquad + \sum_{d=1}^3 \frac{\partial u_d}{\partial x_j}
    \left( u_{th} f_{\alpha-e_d-e_j} + u_j f_{\alpha-e_d}
      + (\alpha_j + 1) f_{\alpha-e_d+e_j} \right) \\
& \qquad \qquad + \frac{1}{2} \frac{\partial u_{th}}{\partial x_j}
    \sum_{d=1}^3 \left(
      u_{th} f_{\alpha-2e_d-e_j} + u_j f_{\alpha-2e_d}
      + (\alpha_j + 1) f_{\alpha-2e_d+e_j}
    \right)
  \Bigg] \Bigg\} \mathcal{H}_{u_{th},\alpha} \left(
    \frac{\bxi - \bu}{\sqrt{u_{th}}}
  \right),
\end{split}
\end{equation}
the acceleration term $\bF \cdot \nabla_{\bv}f$ is expanded as
\begin{equation}
-\sum_{\alpha \in \bbN^3} \sum_{d=1}^3 F_d f_{\alpha-e_d}
  \mathcal{H}_{u_{th},\alpha} \left(
    \frac{\bxi - \bu}{\sqrt{u_{th}}}
  \right),
\end{equation}
and the collision term $\nu(f_M - f)$ is expanded as
\begin{equation}
\nu\sum_{|\alpha|>1}f_{\alpha}\mathcal{H}_{u_{th},\alpha} \left(
    \frac{\bxi - \bu}{\sqrt{u_{th}}}
  \right).  
\end{equation}
Substituting these expansions back into \eqref{eq:vlasov}, and collecting
coefficients for the same basis functions, we get the following
general moment equations with a slight rearrangement:
\begin{equation} \label{eq:mnt_eq}
\begin{split}
& \frac{\partial f_{\alpha}}{\partial t} +
  \sum_{d=1}^3 \left(
    \frac{\partial u_d}{\partial t} +
    \sum_{j=1}^3 u_j \frac{\partial u_d}{\partial x_j} - F_d
  \right) f_{\alpha-e_d} + \frac{1}{2} \left(
    \frac{\partial u_{th}}{\partial t} +
    \sum_{j=1}^3 u_j \frac{\partial u_{th}}{\partial x_j}
  \right) \sum_{d=1}^3 f_{\alpha-2e_d} \\
& \quad + \sum_{j,d=1}^3 \left[
    \frac{\partial u_d}{\partial x_j} \left(
      u_{th} f_{\alpha-e_d-e_j} + (\alpha_j + 1) f_{\alpha-e_d+e_j}
    \right) + \frac{1}{2} \frac{\partial u_{th}}{\partial x_j} \left(
      u_{th} f_{\alpha-2e_d-e_j} + (\alpha_j + 1) f_{\alpha-2e_d+e_j}
    \right)
  \right] \\
& \quad + \sum_{j=1}^3 \left(
    u_{th} \frac{\partial f_{\alpha - e_j}}{\partial x_j} +
    u_j \frac{\partial f_{\alpha}}{\partial x_j} +
    (\alpha_j + 1) \frac{\partial f_{\alpha+e_j}}{\partial x_j}
  \right) = \nu(1 -
  \delta(\alpha))f_{\alpha},
\end{split}
\end{equation}
where $\delta(\alpha)$ is defined by 
\begin{equation}
\label{eq:delta}
  \delta(\alpha) = \left\{
    \begin{array}{ll}
      0, & \text{if}~~ |\alpha| \geqslant 2, \\
      1, & \text{otherwise.}
    \end{array}
\right.
\end{equation} 
Following the method in \cite{Li}, we deduce the mass conservation in
the case of $\alpha=0$ as
 \begin{equation}
\label{mass_con}
  \frac{\partial f_0}{\partial x_j} +
  \sum_{j=1}^3 \left(
    u_j \frac{\partial f_0}{\partial x_j} +
    f_0 \frac{\partial u_j}{\partial x_j}
  \right) = 0.
 \end{equation}
If we set $\alpha = e_d$, with $ d = 1,2,3$,  \eqref{eq:mnt_eq} reduces to
\begin{equation}
\label{eq:alpha = e_d}
f_0 \left(
  \frac{\partial u_d}{\partial t} +
  \sum_{j=1}^3 u_j \frac{\partial u_d}{\partial x_j} - F_d
\right) + f_0 \frac{\partial u_{th}}{\partial x_d} 
  + u_{th} \frac{\partial f_0}{\partial x_d}
  + \sum_{j=1}^3 (\delta_{jd} + 1)
    \frac{\partial f_{e_d + e_j}}{\partial x_j} = 0,
\end{equation}
which is simplified as
\begin{equation} \label{eq:mtm}
f_0 \left(
  \frac{\partial u_d}{\partial t} +
  \sum_{j=1}^3 u_j \frac{\partial u_d}{\partial x_j} - F_d
\right) + \sum_{j=1}^3 \frac{\partial p_{jd}}{\partial x_j} = 0.
\end{equation}

Then we consider the case of $|\alpha| \ge 2$. Multiplying $|\bv -
\bu|^2$ on both sides of \eqref{eq:vlasov}, and integrating with
respect to $\bv$ on $\bbR ^3$, we have
\begin{equation} 
\label{eq:moment_eq}
f_0 \left( \frac{\partial u_{th}}{\partial t}
  + \sum_{j=1}^3 u_j \frac{\partial u_{th}}{\partial x_j} \right)
  + \frac{2}{3} \sum_{j=1}^3 \left(
    \frac{\partial q_j}{\partial x_j} +
    \sum_{d=1}^3 p_{jd} \frac{\partial u_d}{\partial x_j}
  \right) = 0.
\end{equation}
\begin{remark}
 Since  
\begin{equation}
\int_{\bbR^3} |\bv-\bu|^2\pd{f}{v_i}\mathrm{d}\bv =
-2\int_{\bbR^3}(v_i - u_i)f \mathrm{d}\bv  = 0, \quad i = 1,2,3,
\end{equation} 
the acceleration term does not appear in \eqref{eq:moment_eq}.
\end{remark}
Substituting \eqref{eq:mtm} and \eqref{eq:moment_eq} into
\eqref{eq:mnt_eq}, we eliminate the temporal derivatives of $\bv$
and $u_{th}$. Then the quasi-linear form of the moment system
reads:
\begin{equation} \label{eq:mnt_system}
\begin{split}
& \frac{\partial f_{\alpha}}{\partial t} - \frac{1}{f_0}
  \sum_{d=1}^3 \sum_{j=1}^3
    \frac{\partial p_{jd}}{\partial x_j} f_{\alpha-e_d}
  - \frac{1}{3f_0} \sum_{j=1}^3 \left(
    \frac{\partial q_j}{\partial x_j} +
    \sum_{d=1}^3 p_{jd} \frac{\partial u_d}{\partial x_j}
  \right) \sum_{d=1}^3 f_{\alpha-2e_d} \\
& \quad + \sum_{j,d=1}^3 \left[
    \frac{\partial u_d}{\partial x_j} \left(
      u_{th} f_{\alpha-e_d-e_j} + (\alpha_j + 1) f_{\alpha-e_d+e_j}
    \right) + \frac{1}{2} \frac{\partial u_{th}}{\partial x_j} \left(
      u_{th} f_{\alpha-2e_d-e_j} + (\alpha_j + 1) f_{\alpha-2e_d+e_j}
    \right)
  \right] \\
& \quad + \sum_{j=1}^3 \left(
    u_{th} \frac{\partial f_{\alpha - e_j}}{\partial x_j} +
    u_j \frac{\partial f_{\alpha}}{\partial x_j}\right) +
    \sum_{j=1}^3 (\alpha_j + 1) \frac{\partial f_{\alpha+e_j}}{\partial x_j}
   =\nu(1 - \delta(\alpha))f_{\alpha}, \quad \forall |\alpha| \geq 2.
\end{split}
\end{equation}
We collect the equations \eqref{mass_con}, \eqref{eq:mtm}, \eqref{eq:moment_eq} and
\eqref{eq:mnt_system} together to obtain a moment system with infinite
number of equations.

\subsection{Closure of the moment system}
\label{the moment closure}
With a truncation of \eqref{eq:expansion}, \eqref{eq:mnt_system} will
result in a finite moment system. Precisely, we let $M\geqslant 3$ be
a positive integer and only the coefficients in the set $\mathcal{M}
= \{f_{\alpha}\}_{|\alpha|\leqslant M}$ are considered. Let
$F_{M}(\bu, u_{th})$ denotes the linear space spanned by all
$\mathcal{H}_{u_{th}, \alpha}(\bxi)$'s with $|\alpha|\leqslant M$,
and the expansion \eqref{eq:expansion} is truncated as
\begin{equation}
  \label{eq:expansion_1}
  f(\bv) \approx \sum_{|\alpha| \leqslant M} f_{\alpha} \mathcal{H}_{u_{th}, 
    \alpha}\left(\frac{\bv -\bu}{\sqrt{u_{th}}}\right),
\end{equation}
with $f(\bv) \in F_{M}(\bu, u_{th}) $ and $f_{\alpha} \in
\mathcal{M}$. The moment equations which contain
$\partial{f_{\alpha}}/\partial{t}$ with $|\alpha|>M$ are disregarded
in \eqref{eq:mnt_system}.  Then, \eqref{mass_con},
\eqref{eq:mtm} and \eqref{eq:mnt_system} with $2\leqslant
|\alpha|\leqslant M$ lead to a system with finite number of equations.

Due to the presence of the terms with ${\partial f_{\alpha +
    e_j}}/{\partial x_j}$, $|\alpha| = M$, the moment system we
obtained is not closed yet. We rewrite \eqref{eq:mnt_system} into the
form below:
\begin{equation}
\label{eq:mnt_system_2}
\pd{f_{\alpha}}{t} + \mathcal{A}_{\alpha} +  \mathcal{B}_{\alpha}
= \nu(1 - \delta(\alpha))f_{\alpha},
\end{equation}
where in the case of $2 \leq |\alpha| < M$,
\begin{equation*}
\begin{split}
\mathcal{A}_{\alpha}& = - \frac{1}{f_0}
  \sum_{d=1}^3 \sum_{j=1}^3
    \pd{p_{jd}}{x_j} f_{\alpha-e_d} + \cdots
  + \sum_{j=1}^3 \left(
    u_{th} \pd{f_{\alpha - e_j}}{x_j} +
    u_j \pd{f_{\alpha}}{x_j}\right) \\
 & +\sum_{j=1}^3 (\alpha_j + 1) \pd{ f_{\alpha+e_j}}{x_j}, \\
 \mathcal{B}_{\alpha}& = 0,
\end{split}
\end{equation*}
and in the case of $|\alpha| = M$,
\begin{equation*}
\begin{split}
  \mathcal{A}_{\alpha}& = - \frac{1}{f_0} \sum_{d=1}^3 \sum_{j=1}^3
  \pd{p_{jd}}{x_j} f_{\alpha-e_d} + \cdots
   + \sum_{j=1}^3 \left( u_{th} \pd{f_{\alpha - e_j}}{x_j} +
    u_j \pd{f_{\alpha}}{x_j}\right), \\
  \mathcal{B}_{\alpha}& = \sum_{j=1}^3\mathcal{B}_{\alpha,j}, \quad
  \mathcal{B}_{\alpha,j} = (\alpha_j + 1) \pd{ f_{\alpha+e_j}}{x_j}.
\end{split}
\end{equation*}

Clearly, the moments $f_{\alpha+e_j}$ in term $\mathcal{B}_{\alpha}$
with $|\alpha| = M$ are not in the set of moments $\mathcal{M}
=\{f_{\alpha}\}_{\alpha \leqslant M}$, and have to be substituted by
some expressions consisting of lower order moments to make the moment
system closed. If we simply let $\mathcal{B}_{\alpha} = 0$, the
Grad-type system associated with the moment set $\mathcal{M} =
\{f_{\alpha}\}_{|\alpha| \leqslant M}$ is obtained. It is known that
the Grad-type system is not locally well-posed due to the lack of the
global hyperbolicity, which results in numerical blow-up when the
distribution function is far away from the equilibrium state. The
regularization method in \cite{Fan_new, Fan} is adopted here to
achieve a globally hyperbolic moment closure. The $(M+1)$-st order
terms are substituted as bellow
\begin{equation}
  \label{eq:revision}
  \frac{\partial f_{\alpha+e_j}}{\partial x_j} \longrightarrow
  -\left(\sum\limits_{d=1}^3f_{\alpha-e_d+e_j}\pd{u_d}{x_j} +
  \frac{1}{2}\left(\sum_{d=1}^Df_{\alpha-2e_d+e_j}\pd{u_{th}}{x_j}\right)\right),
  \quad |\alpha| = M,~~j=1,2,3.
\end{equation}
Let $\mathcal{\hat{B}}_{\alpha}$ denote the regularization term based
on the characteristic speed correction in \cite{Fan_new, Fan} for
$|\alpha| = M$
\begin{equation}\label{RM}
  \mathcal{\hat{B}}_{\alpha} = -\sum_{d=1}^3\mathcal{\hat{B}}_{\alpha,j}, 
  \quad \mathcal{\hat{B}}_{\alpha,j} = -\left(\sum\limits_{d=1}^3f_{\alpha-e_d+e_j}
  \pd{u_d}{x_j} + \frac{1}{2}\sum_{d=1}^Df_{\alpha-2e_d+e_j}
    \pd{u_{th}}{x_j}\right).
\end{equation}
Substituting the $(M+1)$-st order term $\mathcal{B}_{\alpha}$ with the
regularization term $\mathcal{\hat{B}}_{\alpha}$, the moment equations
are revised as
\begin{equation}
\label{eq:mnt_system_4}
\pd{f_{\alpha}}{t} + \mathcal{A}_{\alpha} + \mathcal{\hat{B}}_{\alpha}  = \nu(1 -
  \delta(\alpha))f_{\alpha},
\end{equation}
with $\mathcal{\hat{B}}_{\alpha} = \mathcal{B}_{\alpha} = 0$ for $2
\leqslant |\alpha| < M$.  If the distribution function $f$ only
depends on $x_1$ in the spatial direction, we have that
\begin{equation}
  \mathcal{\hat{B}}_{\alpha,j}=0, \quad {\rm for~~} j = 2, 3.
\end{equation}
\begin{remark}
  The regularized moment system \eqref{eq:mnt_system_4} is not able to
  be written into a conservation law, for the presence of the
  regularization term. If we let $\mathcal{\hat{B}}_{\alpha}=0$ with
  $|\alpha| = M$, the system changes into the conservative Grad-type
  system.
\end{remark}


\section{Numerical Method}
The numerical scheme for the regularized moment system
\eqref{eq:mnt_system_4} is a natural extension of the method in
\cite{NRxx}. By a standard fraction step method, we split the
V-B equations into the following parts:
\begin{itemize}
\item the convection step: 
\begin{equation}
\label{eq:convection_part}
\pd{f}{t} + \bv \cdot \nabla_{\bx}f =  0,
\end{equation}
\item the acceleration step: 
\begin{gather}
\label{eq:force_part_1}
\pd{f}{t} + \bF(t,\bx,\bv)\cdot \nabla_{\bv} f = 0,\\
\bF(t,\bx,\bv) = \frac{q}{m}\boldsymbol{E}(t,\bx), \quad
\boldsymbol{E}(t,\bx) = -\nabla_{\bx} ~\phi(t,\bx),\quad
-\Delta_{\bx} \phi = q\frac{ \rho}{\epsilon_0}.
\end{gather}
\item the collision step:
\begin{equation}
\label{eq:collision_part}
\pd{f}{t} = \nu (f_M - f).
\end{equation}
\end{itemize}
We observe that only \eqref{eq:mtm} contains the electric force $\bF$
in the governing equations. Thus the governing equations of the
acceleration part turn into
\begin{gather}
  \label{eq:force_part} 
\partial_t\bu = \bF,\\
  \bF(t,\bx,\bv) = \frac{q}{m}\boldsymbol{E}(t,\bx), \quad \boldsymbol{E}(t,\bx) = -\nabla_{\bx}
  ~\phi(t,\bx),\quad  -\Delta_{\bx} \phi = q\frac{ \rho}{\epsilon_0}.
\end{gather}

Here we restrict our study in 1D spatial space. The numerical scheme
adopted in the $x$-direction is the standard finite volume
discretization. Suppose $\Gamma_h$ to be a uniform mesh in $\bbR$, and
each cell is identified by an index $j$. For a fixed $x_0 \in \bbR$
and $\Delta x > 0$,
\begin{equation}
  \Gamma_h = \big\{T_{j} = x_0 + \left( j\Delta x, ~(j + 1)\Delta x
  \right): j \in  \mathbb{Z}\big\}.
\end{equation} 
The numerical solution which is  the approximation of the distribution
function $f$ at  $t = t_n$ is denoted as
\begin{equation}
  f_h^n(x, \bv) = f_j^n(\bv),\quad x \in T_j,
\end{equation}
where $f_{j}^n(\bv) $ is the approximation over the cell $T_{j}$ at
the $n$-th time step and has the following Hermite expansion form as
\[
f_{j}^n(\bv) = \sum_{|\alpha| \leqslant M} f_{\alpha,j}^n
\mathcal{H}_{u_{th,j}^n, \alpha}\left(\frac{\bv
    -\bu_j^n}{\sqrt{u_{th,j}^n}}\right).
\]
 
\subsection{The convection step }
The equation \eqref{eq:convection_part} is discretized as
\begin{equation}
\label{eq:scheme}
  f_{j}^{n+1,\ast}(\bv) = f_{j}^n(\bv) + K_{1,j}^n(\bv) + K_{2,j}^{n}(\bv),
\end{equation}
where $K_{1,j}^n$ is the contribution of the term $\mathcal{A}_\alpha$
in \eqref{eq:mnt_system_4} without considering the acceleration, and
$K_{2,j}^n$ is the contribution of the term $\mathcal{\hat{B}}_\alpha$
in \eqref{eq:mnt_system_4}. Noticing that the term $\mathcal{A}_\alpha$
results in the conservative part in the Grad-type moment system,
its contribution $K_{1,j}^n$ is discretized in the conservative
formation as
\begin{equation}
\label{eq:con}
K_{1,j}^n(\bv) = -\frac{\Delta t^n}{\Delta x}
\left[F_{j+\frac{1}{2}}^n(\bv) - F_{j-\frac{1}{2}}^n(\bv)\right],
\end{equation}
where $F_{j+\frac{1}{2}}^n$ is the numerical flux between cell $T_{j}$
and $T_{j+1}$ at $t^n$. We use the HLL scheme in our numerical
experiments following \cite{Cai}, which reads:
\begin{equation}
  \label{eq:HLL}
  F_{j+\frac{1}{2}}^n(\bv) =
  \begin{cases}
    v_1f_{j}^n(\bv), & 0 \leqslant \lambda_{j+\frac{1}{2}}^L, \\
   \dfrac{\lambda_{j+\frac{1}{2}}^Rv_1f_{j}^n(\bv) -
     \lambda_{j+\frac{1}{2}}^L v_1f_{j+1}^n(\bv)+
     \lambda_{j+\frac{1}{2}}^L\lambda_{j+\frac{1}{2}}^R[f_{j+1}^n(\bv) -
     f_{j}^n(\bv)]}{\lambda_{j+\frac{1}{2}}^R-\lambda_{j+\frac{1}{2}}^L}, &
   \lambda_{j+\frac{1}{2}}^L < 0 < \lambda_{j+\frac{1}{2}}^R, \\
 v_1f_{j+1}^n(\bv),  & 0 \geqslant \lambda_{j+\frac{1}{2}}^R,
  \end{cases}
\end{equation}
where $\lambda_{j+\frac{1}{2}}^L$ and $\lambda_{j+\frac{1}{2}}^R$ are
the fastest signal speeds \cite{Fan} as
\begin{equation}
  \label{eq:velocity}
\begin{split}
  \lambda_{j+\frac{1}{2}}^L = 
  \min\{u_{1,j}^n-C_{M+1} \sqrt{u_{th,j}^n},
  u_{1, j+1}^n - C_{M+1}\sqrt{u_{th,j+1}^n} \}, \\
  \lambda_{j+\frac{1}{2}}^R = \max\{u_{1,j}^n + C_{M+1} \sqrt{u_{th,j}^n},
  u_{1,j+1}^n + C_{M+1}\sqrt{u_{th,j+1}^n} \},
\end{split}
\end{equation}
where $C_{M+1}$ is the greatest zero of $He_{M+1}(x)$, $u_1$ is the
first component of the macroscopic velocity $\bu$, and $u_{th}$ is the
thermal velocity. The formula of the signal speed is also used to
determine the time step $\Delta t^n$ by the CFL condition.
Two points remain unclear in the numerical flux. The first one is how
to calculate $v_1f_{j}^n(\bv)$. This is managed according to the
recursion relation of Hermite polynomials:
\begin{equation}
  \label{eq:v_multiply}
\begin{split}
 v_1f_{j}^n(\bv)& =
  [(u_{th,j}^n)^{1/2}(\xi_{1,j}^n) +
  (u_{1,j}^n)]\sum\limits_{|\alpha|\leqslant
    M}f_{j,\alpha}^n\mathcal{H}_{j,\alpha}^n(\bxi_{j}^n)
  \\
&= \sum\limits_{|\alpha|\le M}
f_{j,\alpha}^n[u_{th,j}^n\mathcal{H}_{j,\alpha+e_1}^n(\bxi_{j}^n)
+ (u_{1,j}^n)\mathcal{H}_{j,\alpha}^n(\bxi_{j}^n) +
\alpha_1\mathcal{H}_{j,\alpha-e_1}^n(\bxi_{j}^n)],
\end{split}
\end{equation} 
where $\bxi_{j}^n = (\bv -\bu_{j}^n)/\sqrt{u_{th,j}^n}$,
$\xi_{1,j}^n$ is the first entry of $\bxi_{j}^n$, and
$\bu_j^n,u_{th,j}^n$ are the mean macroscopic velocity and thermal
velocity in the $j$-th cell. Since $|\alpha + e_1| = M+1$ when
$|\alpha| = M$, $v_1f_{j}^n(\bv)$ no longer exists in the space
$F_{M}(\bu_{j}^n,u_{th,j}^n)$. By simply dropping the terms with
$|\alpha+e_1| = M+1 $, we project $v_1f_{j}^n(\bv)$ back into
$F_{M}(\bu_{j}^n,u_{th,j}^n)$, since when $|\alpha|> M$,
$H_{\alpha}(\bxi)$ is orthogonal to $F_{M}(\bu,u_{th})$ with respect
to the inner product
\begin{equation}
  \label{eq:inner-product}
  (f,g) = \int_{\bbR^3}f(\bxi)g(\bxi)\exp\left(\frac{|\bxi|^2}{2}\right)\mathrm{d}\bxi.
\end{equation}

The other point is how to add up two distribution functions lying in
$F_{M}(\bu_{j}^n,u_{th,j}^n)$ and $F_{M}(\bu_{j+1}^n,u_{th,j+1}^n)$
respectively. The proposition in \cite{NRxx} is referred to solve it.
\begin{proposition}
  Suppose $f\in F_{M}(\bu_1,u_{th,1})$ can be represented by
 \begin{equation}
   f(\bv) = \sum\limits_{|\alpha|\leqslant M}f_{1,\alpha}
\mathcal{H}_{u_{th,1},\alpha}(\bxi_1),\quad \bxi_1 = (\bv-\bu_1)/\sqrt{u_{th,1}}.
 \end{equation}
 For some $\bu_2\in \bbR^3$ and $u_{th,2} > 0$,
 $\{F_{\alpha}(\tau)\}_{|\alpha|\leqslant M}$ satisfies
\begin{equation}
  \left\{
    \begin{array}{lr}
      \dfrac{{\rm d} F_{\alpha}}{{\rm d} \tau} = \displaystyle\sum_{d=1}^3
      S^2 \left[ u_{th,1} RF_{\alpha-2e_d} + w_d\sqrt{u_{th,1}}F_{\alpha-e_d} 
      \right], & \quad \forall \tau \in [0,1], \\
      F_{\alpha}(0) = f_{1,\alpha},
    \end{array}
\right.
\end{equation}
where $S$ and $R$ are given below. And $\bw =
(\bu_1-\bu_2)/\sqrt{u_{th,2}}$, $\hat{u}_{th}=\sqrt{u_{th,1}/u_{th,2}}$. 
\begin{equation}
  R(\tau) = \frac{\hat{u}_{th} -1}{(\hat{u}_{th}-1)\tau +
    1}, \quad S(\tau) = 1- \tau R(\tau) =
  \frac{1}{(\hat{u}_{th} -1)\tau +1}.
\end{equation}
Let 
\begin{equation}
  g(\bv) = \sum\limits_{|\alpha|\leqslant
    M}F_{\alpha}(1)\mathcal{H}_{u_{th,2},\alpha}(\bxi_2),
  \quad  \bxi_2 = (\bv - \bu_2)/\sqrt{u_{th,2}}.
\end{equation}
Then $g(\bv) \in F_{M}(\bu_2,u_{th,2})$ and $g(\bv)$ satisfies 
\begin{equation}
\label{eq:inner_product}
  \int_{\bbR^3}p(\bv)f(\bv)\mathrm{d}\bv = \int_{\bbR^3}p(\bv)g(\bv)\mathrm{d}\bv, \quad
  \forall p(\bv) \in P_{M}(\bbR^3).
\end{equation}
\end{proposition}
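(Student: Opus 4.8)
I would establish the inclusion $g\in F_M(\bu_2,u_{th,2})$ by inspection (it is the definition of $g$), so that the content is the moment identity \eqref{eq:inner_product}; I would prove it by reading the ODE for $\{F_\alpha(\tau)\}$ as a smooth deformation of the truncated Hermite expansion from the pair $(\bu_1,u_{th,1})$ to $(\bu_2,u_{th,2})$, along which every moment of total degree at most $M$ is frozen. The two ingredients are, first, the ``raising'' identities for the basis functions (with $\bxi=(\bv-\bu)/\sqrt{u_{th}}$, derivatives taken at fixed $\bv$)
\[
\pdd{u_d}\mathcal{H}_{u_{th},\alpha}(\bxi)=\mathcal{H}_{u_{th},\alpha+e_d}(\bxi),\qquad
\pdd{u_{th}}\mathcal{H}_{u_{th},\alpha}(\bxi)=\frac12\sum_{d=1}^3\mathcal{H}_{u_{th},\alpha+2e_d}(\bxi),
\]
which come from $\He_n'=n\He_{n-1}$ and $\He_{n+1}(x)=x\He_n(x)-n\He_{n-1}(x)$ exactly as in the derivation of the moment system \eqref{eq:mnt_eq}; and, second, the explicit path
\[
u_{th}(\tau)=u_{th,1}\,S(\tau)^2,\qquad \bu(\tau)=\bu_1-\sqrt{u_{th,1}}\,\tau\,S(\tau)\,\bw,\qquad \tau\in[0,1].
\]
Using $S=1-\tau R$ and $\od{S}{\tau}=-RS$ one checks $\bu(0)=\bu_1$, $u_{th}(0)=u_{th,1}$, $\bu(1)=\bu_2$, $u_{th}(1)=u_{th,2}$, together with $\od{u_d}{\tau}=-S^2 w_d\sqrt{u_{th,1}}$ and $\od{u_{th}}{\tau}=-2S^2 u_{th,1}R$; substituting these two derivatives, the ODE of the proposition becomes precisely $\od{F_\alpha}{\tau}=-\sum_{d}\od{u_d}{\tau}F_{\alpha-e_d}-\frac12\od{u_{th}}{\tau}\sum_{d}F_{\alpha-2e_d}$ for $|\alpha|\le M$.

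Then I would set $h(\tau,\bv)=\sum_{|\alpha|\le M}F_\alpha(\tau)\,\mathcal{H}_{u_{th}(\tau),\alpha}(\bxi(\tau))$ with $\bxi(\tau)=(\bv-\bu(\tau))/\sqrt{u_{th}(\tau)}$, so that $h(0,\cdot)=f$ and $h(1,\cdot)=g$, and differentiate in $\tau$. Applying the raising identities and relabeling multi-indices (with $F_\gamma$ and $\mathcal{H}_{u_{th},\gamma}$ understood to vanish for $\gamma\notin\bbN^3$) gives $\pd{h}{\tau}=\sum_{|\beta|\le M+2}c_\beta(\tau)\,\mathcal{H}_{u_{th}(\tau),\beta}(\bxi(\tau))$, where for $|\beta|\le M$ one has $c_\beta=\od{F_\beta}{\tau}+\sum_{d}\od{u_d}{\tau}F_{\beta-e_d}+\frac12\od{u_{th}}{\tau}\sum_{d}F_{\beta-2e_d}=0$ by the ODE, while for $M+1\le|\beta|\le M+2$ the coefficients involve only $\od{u_d}{\tau}$, $\od{u_{th}}{\tau}$ and the already-defined $F_\gamma$ with $|\gamma|\le M$. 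For any $p\in P_M(\bbR^3)$ the substitution $\bv\mapsto\bxi$ turns $p$ into a polynomial of degree at most $M$ in $\bxi$, while $\mathcal{H}_{u_{th},\beta}$ is a Gaussian times $\prod_d\He_{\beta_d}(\xi_d)$, orthogonal to every polynomial of degree $<|\beta|$; hence $\int_{\bbR^3}p(\bv)\,\mathcal{H}_{u_{th},\beta}(\bxi)\dd\bv=0$ whenever $|\beta|>M$. Therefore $\int_{\bbR^3}p(\bv)\,\pd{h}{\tau}(\tau,\bv)\dd\bv=\sum_{|\beta|\le M}c_\beta(\tau)\int_{\bbR^3}p(\bv)\,\mathcal{H}_{u_{th}(\tau),\beta}(\bxi(\tau))\dd\bv=0$; differentiating under the integral sign (legitimate by the uniform Gaussian decay of $h$ for $\tau\in[0,1]$) and integrating over $\tau\in[0,1]$ yields $\int p\,f\dd\bv=\int p\,g\dd\bv$, which is \eqref{eq:inner_product}.

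The raising identities and the verification of the path are routine bookkeeping. The step that carries the real weight is the orthogonality observation: it is exactly what makes truncation at order $M$ harmless, since the deformation only ever leaks into Hermite modes of order $\ge M+1$, which are invisible to test polynomials of degree $\le M$. The one point to watch is the index arithmetic in $\pd{h}{\tau}$ --- confirming that the coefficients $c_\beta$ with $|\beta|>M$ never call for an undefined $F_\gamma$ --- so that $\pd{h}{\tau}$ is genuinely a finite linear combination of basis functions and the argument above is well posed.
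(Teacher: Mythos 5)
Your proposal is correct, and it is complete. Note that the paper itself does not prove this proposition at all: it is quoted verbatim from the reference \cite{NRxx}, so there is no in-paper argument to compare against; judged on its own, your proof supplies exactly the missing justification. Your key computations check out: with $S(\tau)=1/((\hat{u}_{th}-1)\tau+1)$ one has $\od{S}{\tau}=-RS$, hence along your path $u_{th}(\tau)=u_{th,1}S^2$, $\bu(\tau)=\bu_1-\sqrt{u_{th,1}}\,\tau S\,\bw$ the endpoint values are $(\bu_1,u_{th,1})$ and $(\bu_2,u_{th,2})$ and $\od{u_d}{\tau}=-S^2w_d\sqrt{u_{th,1}}$, $\od{u_{th}}{\tau}=-2S^2u_{th,1}R$, so the stated ODE is precisely the statement that the coefficient of every basis function of order $\leqslant M$ in $\pd{h}{\tau}$ vanishes; the raising identities $\pdd{u_d}\mathcal{H}_{u_{th},\alpha}=\mathcal{H}_{u_{th},\alpha+e_d}$ and $\pdd{u_{th}}\mathcal{H}_{u_{th},\alpha}=\tfrac12\sum_d\mathcal{H}_{u_{th},\alpha+2e_d}$ follow from $\He_{n+1}(x)=x\He_n(x)-\He_n'(x)$ as you say, and they are the same identities used in Section 2 to derive \eqref{eq:mnt_eq}. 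You also correctly isolate the one point that makes the truncated statement true rather than merely formal: the deformation leaks only into modes of order $M+1$ and $M+2$, whose coefficients involve only the already-defined $F_\gamma$ with $|\gamma|\leqslant M$, and these modes are annihilated by any test polynomial of degree $\leqslant M$ after the affine change of variables $\bv=\bu(\tau)+\sqrt{u_{th}(\tau)}\,\bxi$. This homotopy-plus-orthogonality argument is in the same spirit as the derivation of the projection algorithm in the cited NR$xx$ work (where the coefficient ODE is obtained by requiring the moments to be frozen as the expansion center and scaling are varied), so your route is the natural one; the only cosmetic caveat is that positivity of $u_{th}(\tau)$ on $[0,1]$ (clear since $(\hat{u}_{th}-1)\tau+1>0$) deserves a one-line remark to guarantee the intermediate basis functions are well defined.
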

The proposition provides an algorithm to project a function in
$F_{M}(\bu_1, u_{th,1})$ to the space $F_{M}(\bu_2,
u_{th,2})$. Assuming $f_1 \in F_{M}(\bu_1,u_{th,1})$ and $f_2 \in
F_{M}(\bu_2,u_{th,2})$, we can find an $\tilde{f}_1\in
F_M(\bu_2,u_{th,2})$ as a representation of $f_1 \in
F_M(\bu_1,u_{th,1})$ in the sense of \eqref{eq:inner_product}.  Thus,
to add up $f_j\in F_{M}(\bu_{j}^n, u_{th,j}^n)$ and $f_{j+1}\in
F_{M}(\bu_{j+1}^n, u_{th,j+1}^n)$, we first find an $\hat{f}_{j+1}\in
F_{M}(\bu_{j}^n, u_{th,j}^n)$ as an approximation of $f_{j+1}$ in the
sense of \eqref{eq:inner_product}, and then add up the coefficients of
$f_{j}$ and $\hat{f}_{j+1}$ for the same basis function.
 
The regularization term appears only in the moment equations
containing $\partial{f_{\alpha}}/\partial{t}$ with $|\alpha| =
M$. Therefore, only when $|\alpha| = M$, we have to calculate
$K_{2,j}^n$. We simply use the central difference scheme to
approximate the spatial derivatives in \eqref{RM}:
\begin{equation}
\begin{split}
  \pd{u_d}{x}& \approx \nabla_x^h u_{d,j}^n \triangleq \frac{u_{d,j +
      1}^n - u_{d,j-1}^n}{2\Delta x}, \\
  \pd{u_{th}}{x}& \approx \nabla_x^h u_{th,j}^n \triangleq
  \frac{u_{th,j + 1}^n - u_{th,j-1}^n}{2\Delta x}.
\end{split}
\end{equation} 
Then we get the numerical approximation for $K_{2,j}^n$ with $|\alpha| =
M$, as
\begin{equation}
\begin{split}
\label{revision_scheme}
K_{2,j}^n = -\Delta t\sum_{|\alpha|=M}
\left(\alpha_1+1\right)\sum\limits_{d=1}^3\left(f_{\alpha-e_d+e_1}^{n}
  \nabla_x^h u_{d,j}^n + \frac{f_{\alpha-2e_d+e_1}^{n}}{2} \nabla_x^h
  u_{th,j}^n \right) \mathcal{H}_{u_{th,j}^n,\alpha}
\left(\frac{\bv-\bu_j^n}{\sqrt{u_{th,j}^n}}\right).
\end{split}
\end{equation}

\subsection{The acceleration and collision step}\label{collision_step}
The acceleration step is performed by solving
\begin{equation}\label{eq:true_force}
  \pd{u_1}{t} - F_{1} = 0.
\end{equation}
For the time step $\Delta t $, \eqref{eq:true_force} is approximated
as
\begin{equation} \label{eq:force_solve}
   u^{n+1}_{1,j} = u^{n+1,\ast}_{1,j} 
  + \Delta t F_{1,j}^{n+1},
\end{equation}
where $u_{1,j}^{n+1,\ast}$ is the first entry of the macroscopic
velocity $\bu$ in the $j$-th cell after the convection step at $t =
t^n$. And $F_{1,j}^{n+1}$ is the first entry of the electric force
$\bF$ in the $j$-th cell after the convection step at $t = t^n$. In
the 1D spatial space, the Poisson equation \eqref{eq:force} reduces
into a second order ODE as
\begin{equation}\label{eq:one-dim}
  F_1(t,x,\bv) = \frac{q}{m}E_1(t,x), \quad E_1(t,x) =
  -\pd{\psi(t,x)}{x}, \quad -\partial_{xx}\psi = \frac{q \rho}{\epsilon_0}.
\end{equation}
The three-point central difference scheme is used to discretize the
potential equation
\begin{equation} \label{eq:difference_rho} 
  - \frac{\psi_{j+1}^{n+1} - 2\psi_j^{n+1} + \psi_{j-1}^{n+1}}{\Delta x^2} 
  = \frac{q \rho_j^{n+1}}{\epsilon_0},
\end{equation}
where $\rho_j^{n+1}$ is the density in the $j$-th cell after the
convection step, noticing that the density is not updated in the
acceleration step and the collision step. The central difference is
used to approximate $E_{1,j}^{n+1}$ and $F_{1,j}^{n+1}$
\begin{equation}\label{eq:E}
E_{1,j}^{n+1} = - \frac{\psi_{j+1}^{n+1} - \psi_{j-1}^{n+1}}{2\Delta x},
\quad F_{1,j}^{n+1} = \frac{q}{m} E_{1,j}^{n+1}.
\end{equation}

For the BGK collision model, the moment expansion results in a simple
form \eqref{eq:collision}, and  \eqref{eq:collision_part} changes into
\begin{equation} \label{eq:collision_form}
  \pd{f_{\alpha}}{t} = -\nu f_{\alpha}, \quad |\alpha|\geqslant 2. 
\end{equation} 
It is directly integrated as
\begin{equation}\label{eq:collision_form_1}
  f_{j,\alpha}^{n+1} = f_{j,\alpha}^{n+1,\ast}\exp(-\nu
     \Delta t^n), \quad \forall \alpha\in \bbN^3, \quad 1 <
     |\alpha|\leqslant M.
\end{equation}
\begin{remark}
  The collision step only revises the coefficients with
  $2\leqslant|\alpha|\leqslant M$, and does not change the macroscopic
  velocity $\bu$ and the density $\rho$, which are decided by the
  coefficients with $|\alpha|<2$.
\end{remark}

\subsection{Outline of the algorithm}
The overall numerical scheme is summarized as below: 
{\it\begin{enumerate}
\item Let $n = 0$ and set the initial value of $f_{j,\alpha}^{n}$;
\item Calculate $\Delta t^n$ according to the CFL condition;
\item Integrate the convection term using \eqref{eq:scheme};
\item Update the macroscopic velocity using \eqref{eq:E} and \eqref{eq:force_solve};
\item Apply the collision step \eqref{eq:collision_form_1};
\item Let $n \leftarrow n+1$, and return to step 2.
\end{enumerate}}

\begin{remark}
  For the V-P equation, $\nu = 0$, and the collision step is simply
  skipped over.
\end{remark}
The scheme keeps both the discretized mass and momentum
conserved. Precisely, in the case of the periodic boundary condition, let us
denote the discretized mass as
\begin{equation}
  \mathcal{D}_h(t_n) = \sum_{j=1}^N \Delta x \int_{\bv\in\bbR^3} f_j^n(\bv) \mathrm{d} \bv
\end{equation}
and discretized momentum as
\begin{equation}
  \mathcal{M}_h(t_n) = \sum_{j=1}^N \Delta x \int_{\bv\in\bbR^3} \bv f_{j}^{n}\mathrm{d}\bv,
\end{equation}
we have the following conclusion.
\begin{theorem}
\label{thm:mu_conservation}
The numerical solution $f_h^n(x, \bv)$ given by the algorithm above in
the case of the periodic boundary condition satisfies that
\begin{equation}
  \mathcal{D}_h(t_n) = \mathcal{D}_h(t_0), \quad
  \mathcal{M}_h(t_n) = \mathcal{M}_h(t_0)
\end{equation}
for all $n > 0$.
\end{theorem}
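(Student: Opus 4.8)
The strategy is to track the discretized mass and momentum through each of the three substeps (convection, acceleration, collision) over one time step, show that each substep preserves both quantities, and then conclude by induction on $n$. Since $\mathcal{D}_h$ and $\mathcal{M}_h$ are obtained by integrating $f_j^n(\bv)$ against $1$ and $\bv$ respectively and summing over cells with weight $\Delta x$, it suffices to show: (i) the convection step \eqref{eq:scheme} leaves $\sum_j \int 1 \cdot f_j \dd\bv$ and $\sum_j \int \bv f_j \dd\bv$ unchanged; (ii) the acceleration step changes neither; (iii) the collision step changes neither.

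For step (i), the update is $f_j^{n+1,\ast} = f_j^n + K_{1,j}^n + K_{2,j}^n$. The term $K_{1,j}^n$ is in conservative flux-difference form \eqref{eq:con}, so upon summing over all cells $j$ under the periodic boundary condition the flux telescopes to zero; hence $K_{1,j}^n$ contributes nothing to either moment after summation. The term $K_{2,j}^n$ from \eqref{revision_scheme} involves only basis functions $\mathcal{H}_{u_{th,j}^n,\alpha}$ with $|\alpha| = M \geqslant 3$; since integrating such a basis function against $1$ or against any component $v_i$ picks out coefficients of order $|\alpha| \leqslant 1$, these integrals vanish, so $K_{2,j}^n$ contributes nothing to $\mathcal{D}_h$ or $\mathcal{M}_h$ cell by cell. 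One subtlety here is that the HLL flux \eqref{eq:HLL} combines distributions living in different Hermite spaces $F_M(\bu_j^n,u_{th,j}^n)$ and $F_M(\bu_{j+1}^n,u_{th,j+1}^n)$, and $v_1 f_j^n(\bv)$ is projected back by dropping order-$(M{+}1)$ terms; I would note that by the orthogonality property \eqref{eq:inner_product} in the Proposition, this projection preserves all moments against polynomials of degree $\leqslant M$, in particular against $1$ and $v_i$, so the numerical flux $F_{j+1/2}^n$ is a genuine approximation whose low-order moments are exactly those of $v_1 f$, and the telescoping argument goes through.

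For step (ii), the acceleration step modifies only $u_{1,j}$ via \eqref{eq:force_solve}, leaving the density $\rho_j = f_{0,j}$ untouched (as noted after \eqref{eq:E}); I would use the relations \eqref{eq:similar_relation} — or directly the definitions \eqref{eq:rho}, \eqref{eq:pri} — to check that shifting $\bu_j$ while keeping $f_{0,j}$ and re-expanding in the new Hermite basis preserves $\int f_j \dd\bv = \rho_j$ trivially and preserves $\int \bv f_j \dd\bv = \rho_j \bu_j$ exactly because the update $u_{1,j}^{n+1} = u_{1,j}^{n+1,\ast} + \Delta t\, F_{1,j}^{n+1}$ corresponds to adding $\rho_j \Delta t\, F_{1,j}^{n+1}$ to the momentum in cell $j$; summing over $j$, the total change is $\Delta t \sum_j \Delta x\, \rho_j F_{1,j}^{n+1}$, which must be shown to vanish. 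This is the one genuinely nontrivial cancellation: using $F_{1,j}^{n+1} = \frac{q}{m} E_{1,j}^{n+1}$ with $E_{1,j}^{n+1} = -(\psi_{j+1}^{n+1}-\psi_{j-1}^{n+1})/(2\Delta x)$ and the discrete Poisson equation \eqref{eq:difference_rho}, one substitutes $\rho_j^{n+1}$ in terms of the discrete Laplacian of $\psi$ and performs summation by parts twice on the periodic grid; the resulting sum is $\sum_j (\text{discrete gradient of }\psi)\cdot(\text{discrete gradient of }\psi)$ type expression that telescopes or cancels by antisymmetry. I expect this discrete summation-by-parts identity to be the main obstacle — it is the discrete analogue of $\int \rho E_1 \dd x = -\epsilon_0^{-1} q^{-1}\int (\partial_{xx}\psi)(\partial_x\psi)\dd x = 0$ — and it requires care that the particular choice of three-point stencils in \eqref{eq:difference_rho} and \eqref{eq:E} is consistent enough for the telescoping to be exact rather than merely approximate. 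For step (iii), the collision update \eqref{eq:collision_form_1} multiplies only the coefficients $f_{\alpha,j}$ with $|\alpha| \geqslant 2$ by $\exp(-\nu\Delta t^n)$ and leaves $f_{0,j}$, $f_{e_i,j}$ untouched; since $\int f_j \dd\bv$ and $\int \bv f_j \dd\bv$ depend only on these low-order coefficients (via \eqref{eq:similar_relation}), neither moment changes. Finally, combining (i)–(iii) gives $\mathcal{D}_h(t_{n+1}) = \mathcal{D}_h(t_n)$ and $\mathcal{M}_h(t_{n+1}) = \mathcal{M}_h(t_n)$; induction from $n=0$ completes the proof.
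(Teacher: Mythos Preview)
Your proposal is correct and follows essentially the same approach as the paper: flux telescoping for $K_{1,j}^n$, orthogonality of the $|\alpha|=M$ basis functions to $1$ and $\bv$ for $K_{2,j}^n$, the discrete summation-by-parts identity $\sum_j \rho_j F_{1,j}=0$ obtained by substituting \eqref{eq:difference_rho} into \eqref{eq:E} and telescoping under periodic boundary conditions, and the observation that the collision update touches only coefficients with $|\alpha|\geqslant 2$. The paper carries out the discrete cancellation $\sum_j \rho_j F_{1,j}=0$ explicitly rather than leaving it as a summation-by-parts exercise, but the argument is otherwise the same.
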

\begin{proof}
  Noticing that the mass on each cell is not modified when we  apply the
  regularization term and in the acceleration step, the conservation
  of the mass is straight forward based on results in \cite{Cai}.

  The momentum conservation is equivalent to verify
  \begin{equation}\label{eq:mu-con}
    \mathcal{M}_h(t_{n+1}) = \mathcal{M}_h(t_n).
  \end{equation}
  It is clear that the collision step does not change the macroscopic
  velocity and the density. Therefore we verify below that the momentum is
  conserved in the acceleration step and the  convection step, respectively.
  \begin{enumerate}
  \item We first verify that the acceleration step keeps momentum
    conservation. From \eqref{eq:pri} and \eqref{eq:force_solve}, we
    have
    \begin{equation}
      \begin{split}
        \label{eq:mu-con-1}
        \mathcal{M}_h(t_{n+1})& = \sum_{j=1}^N\Delta
        x\int_{\bv\in\bbR^3}\bv f_{j}^{n+1}\mathrm{d}\bv  \\
        & = \Delta x
        \sum_{j=1}^N \rho_j^{n+1} \bu_{j}^{n+1} \\
        & = \Delta x \sum_{j=1}^N \rho_j^{n+1}
        (\bu_{j}^{n+1,\ast} +  \bF_{j}^{n+1}\Delta t^n)   \\
        & =\Delta x \sum_{j=1}^N \rho_j^{n+1} \bu_j^{n+1,\ast}
        +\Delta x\Delta t^n \sum_{j=1}^N
        \bF_{j}^{n+1}  \rho_j^{n+1}.    
\end{split}
\end{equation}
According to \eqref{eq:difference_rho} and \eqref{eq:E}, we have
\begin{equation}
\begin{split}
\label{eq:final_mu}
\sum_{j=1}^N F_{1,j}^{n+1} \rho_j^{n+1} &=
\sum\limits_{j=1}^N\left(-\frac{\epsilon_0}{q}\right)\frac{\psi_{j+1} - 2\psi_j +
  \psi_{j-1}}{\Delta x^2}\left(-\frac{q}{m}\right)\frac{\psi_{j+1} -
  \psi_{j-1}}{2\Delta x} \\
&= \dfrac{\epsilon_0}{2m\Delta x^3} \sum\limits_{j=1}^N
\big(\psi_{j+1}\psi_{j+1} -
\psi_{j-1}\psi_{j-1} -2\psi_j\psi_{j+1}  \\
&\qquad\qquad\qquad +2\psi_j\psi_{j-1} + \psi_{j-1}\psi_{j+1} -
\psi_{j+1}\psi_{j-1}\big) \\
& = \psi_{N}^2 + \psi_{N+1}^2 - \psi_0^2-\psi_1^2 + \psi_0\psi_1 -
\psi_N\psi_{N+1}.
\end{split}
\end{equation} 
Since we restrict the problem to the 1D spatial space with the
periodic boundary condition, we deduce by $\psi_{N+1} = \psi_1$,
$\psi_{0} = \psi_{N}$, $F_2 \equiv 0$ and $F_3 \equiv 0$ that
\[\sum_{j=1}^N \bF_{j}^{n+1} \rho_j^{n+1} = 0,\]
and we obtain
\begin{equation}
  \label{eq:force_con_1}
  \mathcal{M}_h(t_{n+1}) = \Delta x \sum_{j=1}^N \rho_j^{n+1} \bu_j^{n+1,\ast}.
\end{equation}
Since $\rho_j^{n+1}$ is the density after the $n$-th convection step,
$\Delta x \sum_{j=1}^N \rho_j^{n+1} \bu_j^{n+1,\ast}$ is the total
momentum after the $n$-th convection step.

\item Here we verify that the convection step does not change the
  total momentum. Thanks to \eqref{eq:pri} and \eqref{eq:scheme}, we
  have
  \begin{equation}
    \begin{split}
      & \Delta x \sum_{j=1}^N \rho_j^{n+1} \bu_j^{n+1,\ast} 
      = \sum\limits_{j =1}^N \Delta x\int_{\bv\in\bbR^3} \bv
      f_{j}^{n+1,\ast}\mathrm{d}\bv\\
      = & \sum\limits_{j=1}^N\Delta x \int_{\bv\in\bbR^3} \bv[f_j^n
      + K_{1,j}^n + K_{2,j}^n]\mathrm{d}\bv\\
      = & \sum\limits_{j=1}^N\Delta x \int_{\bv\in\bbR^3} \bv
      f_j^n\mathrm{d}\bv + \sum\limits_{j=1}^N\Delta x
      \int_{\bv\in\bbR^3} \bv K_{1,j}^n\mathrm{d}\bv
      +\sum\limits_{j=1}^N\Delta x
      \int_{\bv\in\bbR^3} \bv K_{2,j}^n\mathrm{d}\bv\\
      = & \sum\limits_{j =1}^N \Delta x\int_{\bv\in\bbR^3} \bv
      f_{j}^{n}\mathrm{d}\bv - \Delta t^n\sum\limits_{j
        =1}^N\int_{\bv\in\bbR^3} \bv (F_{j+1/2}^n -
      F_{j-1/2}^n)\mathrm{d}\bv
      + \sum\limits_{j=1}^N\Delta x 
      \int_{\bv\in\bbR^3} \bv K_{2,j}^n\mathrm{d}\bv \\
      = & \mathcal{M}_h(t_{n}) - \Delta t^n\int_{\bv\in\bbR^3} \bv
      (F_{N+1/2}^n - F_{1/2}^n)\mathrm{d}\bv +
      \sum\limits_{j=1}^N\Delta x \int_{\bv\in\bbR^3} \bv
      K_{2,j}^n\mathrm{d}\bv .
     \end{split}
   \end{equation}
   Due to the periodic boundary condition, we have that $F_{N+1/2}^n =
   F_{1/2}^n$. Meanwhile, the regularization part only updates the
   $M$-th order terms which have no effect on the macroscopic velocity
   $\bu$ and the density $\rho$, thus the regularization term will not
   break the momentum conservation. Precisely speaking, the basis
   functions of $K_{2,j}$ are $\mathcal{H}_{u_{th,j}^n,\alpha} \left(
     (\bv-\bu_j^n)\Big/\sqrt{u_{th,j}^n} \right)$, with $|\alpha|=M$,
   $M \geqslant 3$, which are orthogonal to $\bv$, then
   \begin{equation}
   \int_{\bv\in\bbR^3} \bv K_{2,j}^n\mathrm{d}\bv = 0,
   \end{equation} 
   and we obtain
   \begin{equation}
    \label{eq:convection_con_1}
     \Delta x \sum_{j=1}^N \rho_j^{n+1} \bu_j^{n+1,\ast} =  \mathcal{M}_h(t_{n}).
   \end{equation}
  \end{enumerate}
  With \eqref{eq:force_con_1} and \eqref{eq:convection_con_1}, we
  conclude the total momentum conservation consequently
\begin{equation}
  \mathcal{M}_h(t_{n+1}) =  \mathcal{M}_h(t_{n}).
\end{equation} 
This ends the proof.
\end{proof}
 

\section{Numerical Examples}

We study the linear Landau damping modelled by  the V-P and V-B equations
with the periodic boundary conditions. The CFL number is always set as
$0.45$. The specific examples are from \cite{Filbet}. The form of the
V-B equations with the periodic boundary conditions coupled with the
normalized Poisson equation is
\begin{gather} \label{eq:normal_force} 
  \pd{f}{t} + \bv\cdot \nabla_{x}f + E \cdot \nabla_{\bv} f
  = \nu(f_M - f),  \\
  E(t,x) = - \nabla_x\psi(t,x), \\
  -\Delta \psi(t,x) = 
  \int_{\bbR^3} f(t,x,\bv)\mathrm{d}\bv -1.
\end{gather}
Here we adopt the same  initial data  as in \cite{Filbet} 
\begin{equation}
\label{eq:initial_data}
f(0,x,\bv) = f_M = \frac{1}{\sqrt{2\pi}}e^{-|\bv|^2/2}(1+A
   \cos(kx)), \quad  (x,\bv) \in (0,L)\times \bbR^3,
\end{equation}
where $A$ is the amptitude of the perturbation, $k$ denotes the wave
number, and the periodic length is taken as $L = 2\pi/k$. The initial
data and \eqref{eq:expansion_1} give us
\begin{equation}
  \rho(0,x) =  f_{0}(0,x) = (1+A \cos(kx)), \quad  \left.f_{\alpha}\right|_{t=0} = 0,~~
  ~|\alpha|>0.
\end{equation}

What of one's interests is the evolution of the square root of the
electric energy, which is defined as
\begin{equation} \label{eq:energy} 
\mathcal{E}_h(t) = \sum_{j=1}^N \Delta x E_j^2(t).
\end{equation}
According to Landau's theory, the time evolution of the square root of
$\mathcal{E}_h(t)$ is expected to be exponentially decaying almost
with a fixed rate $\gamma_L$, which is affected by the wave number $k$
and the collision frequency $\nu$. For this purpose, we always plot
the square root of $\mathcal{E}_{h}(t)$ in logarithm scale in the
figures in this section. The total energy is the sum of the electric
energy and the kinetic and internal energy of the particles as
\begin{equation}\label{eq:total_energy}
  \mathcal{E}_{total}(t) = \mathcal{E}_h (t) + \mathcal{E}_p (t),
\end{equation}
where the kinetic and internal energy of the particles
\begin{equation}
  \mathcal{E}_p (t) = \Delta x \sum_{i=1}^N \left(\rho_i(t) u_i^2(t) 
    + \rho_i(t) u_{th,i}(t) \right).
\end{equation}

\subsection{Numerical resolution study}
\label{sec:convergence}

\begin{figure}[!ht]
\centering
\subfigure[$k=0.3$]{
\begin{overpic}[scale=.4]{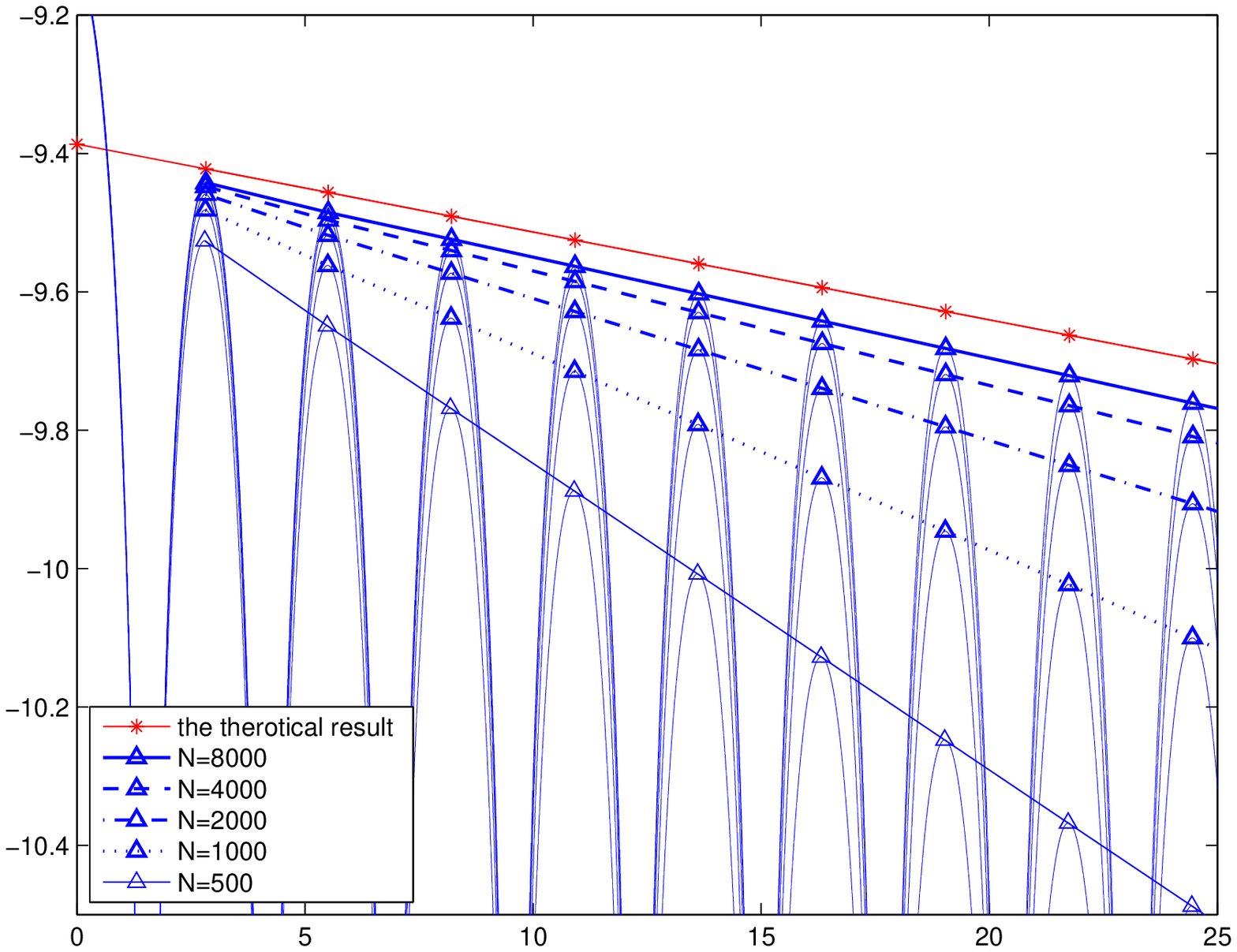}
\end{overpic}
}
\subfigure[$k=0.5$]{
\begin{overpic}[scale=.4]{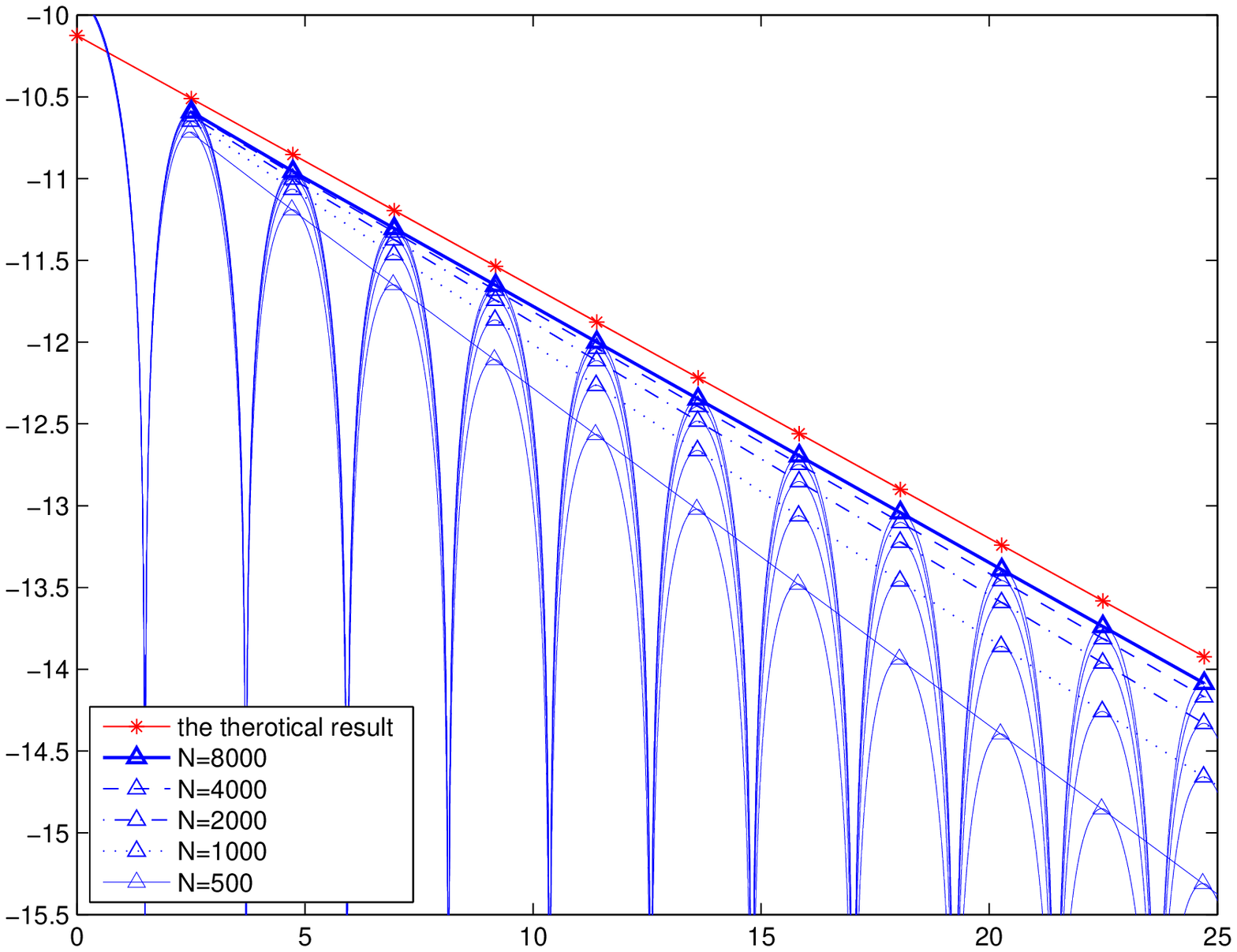}
\end{overpic}
}
\caption{Exponentially damping in time of the square root of
  $\mathcal{E}_h$ on different spatial grids with the wave number $k =
  0.3$ and $0.5$. The curves in blue are the square root of
  $\mathcal{E}_h$ in time using logarithm scale on different spatial
  grid size. The slopes of the blue lines are the numerical damping
  rate $\gamma_L^h$ by the least square fitting of the peak value
  points of $\mathcal{E}_h$. The slope of the red line is the damping
  rate given by the theoretic data in Table
  \ref{tab:limit_damping_rate_collisionless}.}
\label{fig:spatial}
\end{figure}

\begin{figure}[!ht]
\centering
\subfigure[$k=0.3$]{
\begin{overpic}[scale = .4]{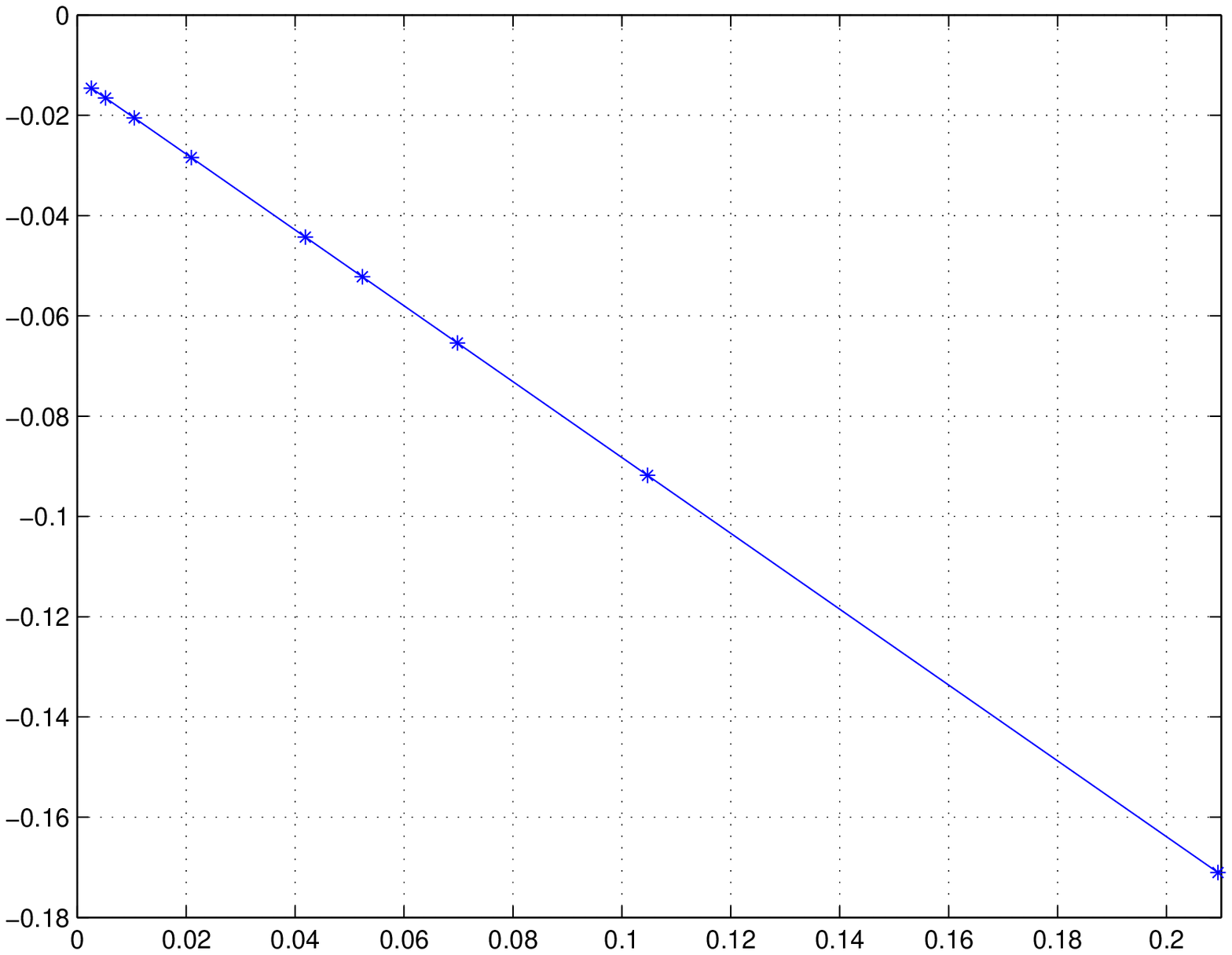}
\end{overpic}
}
\subfigure[$k=0.5$]{
\begin{overpic}[scale = .4]{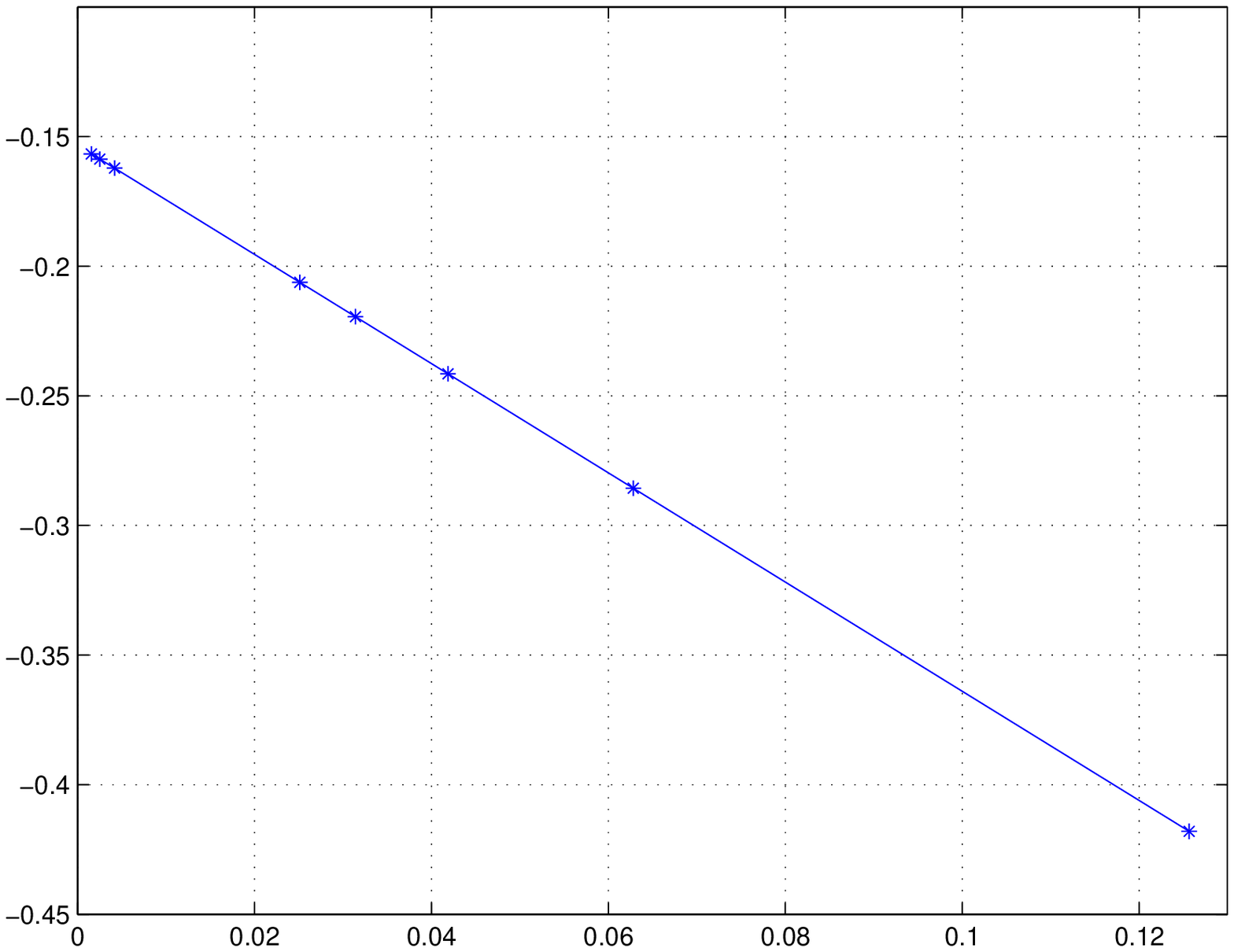}
\end{overpic}
}
\caption{The linear dependence of the numerical damping rates in
  spatial grid size with the wave number $k=0.3$ and $0.5$. The
  $x$-axis is the spatial grid size $\Delta x$ and the $y$-axis is
  numerical damping rate $\gamma_L^h(\Delta x)$. The line is obtained
  by the least square fitting of $\gamma_L^h(\Delta x)$ with $\Delta
  x$ ranging from $L/100$ to $L/8000$. The intercept of the line on
  $y$-axis is the parameter $\gamma_L^{h,0}$ and the slope of the line
  is the parameter $\gamma_L^{h,1}$.}
\label{fig:slope}
\end{figure}

We first examine the numerical convergence on different spatial grid
size. The V-P equations in 1D spatial space and 1D velocity space are
numerically solved. The number of moments is set as $80$. In Figure
\ref{fig:spatial}, the evolution of the square root of
$\mathcal{E}_h(t)$ with the wave numbers $k = 0.3$, $0.5$ on different
grid size is presented. The number of spatial grids we used are $500$,
$1000$, $2000$, $4000$ and $8000$. It is clear that the square root of
$\mathcal{E}_h(t)$ is damping exponentially on all different grid
size. With the increasing of the grid number, the damping rate is
decreasing monotonically. For the spatial grid size $\Delta x$, we
adopt the least square fitting, which uses the peak value points of
$\mathcal{E}_h(t)$, to obtain the numerical damping rate $\gamma_L^h$.
One finds obviously in Figure \ref{fig:spatial} that both
$\mathcal{E}_h(t)$ and the damping rate $\gamma_L^h$ are converging
while the spatial grid size $\Delta x$ is going to zero. Furthermore,
if we take the numerical damping rate $\gamma_L^h$ as a function of
the spatial grid size $\Delta x$ and apply a least square fitting to
retrieve the parameters $\gamma_L^{h,0}$ and $\gamma_L^{h,1}$ (see
Figure \ref{fig:slope}) in the ansatz as
\[ 
\gamma_L^h(\Delta x) = \gamma_L^{h,0} + \gamma_L^{h,1} \Delta x, 
\]
it is found that the fitting provides us both parameters extremely
close to constants for all the values of $k$ we tested ranging from
$0.2$ to $0.5$. This indicates us that the following relation 
\begin{equation}\label{extrapolate_gamma}
\gamma_L^h(\Delta x) - \gamma_L^{h,0} \propto \Delta x
\end{equation}
is approximately valid. The obtained parameter $\gamma_L^{h,0}$ is
regarded as the limit of the numerical damping rate when $\Delta x$ is
going to zero. To our surprise, the limit $\gamma_L^{h,0}$ we obtained
is in perfect agreement with the theoretic data in Table
\ref{tab:limit_damping_rate_collisionless}.

\begin{table}[!ht]
  \centering
  \begin{tabular}[!ht]{|c|c|c|} \hline 
    Wave number $k$ & Theoretic data \cite{Eric} & $\gamma_L^{h,0}$ \\ \hline 
    $0.2$ &   $-5.5\times10^{-5}$     & $-9.28\times10^{-5}$ \\ \hline 
    $0.3$ &   $-0.0126$               & $-0.01260$  \\ \hline
    $0.4$ &   $-0.0661$               & $-0.06614$  \\ \hline
    $0.5$ &   $-0.1533$               & $-0.15334$ \\ \hline   
  \end{tabular}
  \caption{The comparison of the limit numerical damping rates and the theoretic data.}
  \label{tab:limit_damping_rate_collisionless}
\end{table}

\begin{figure}[!ht]
\centering
\subfigure[$k=0.3$]{
\begin{overpic}[scale=.4]{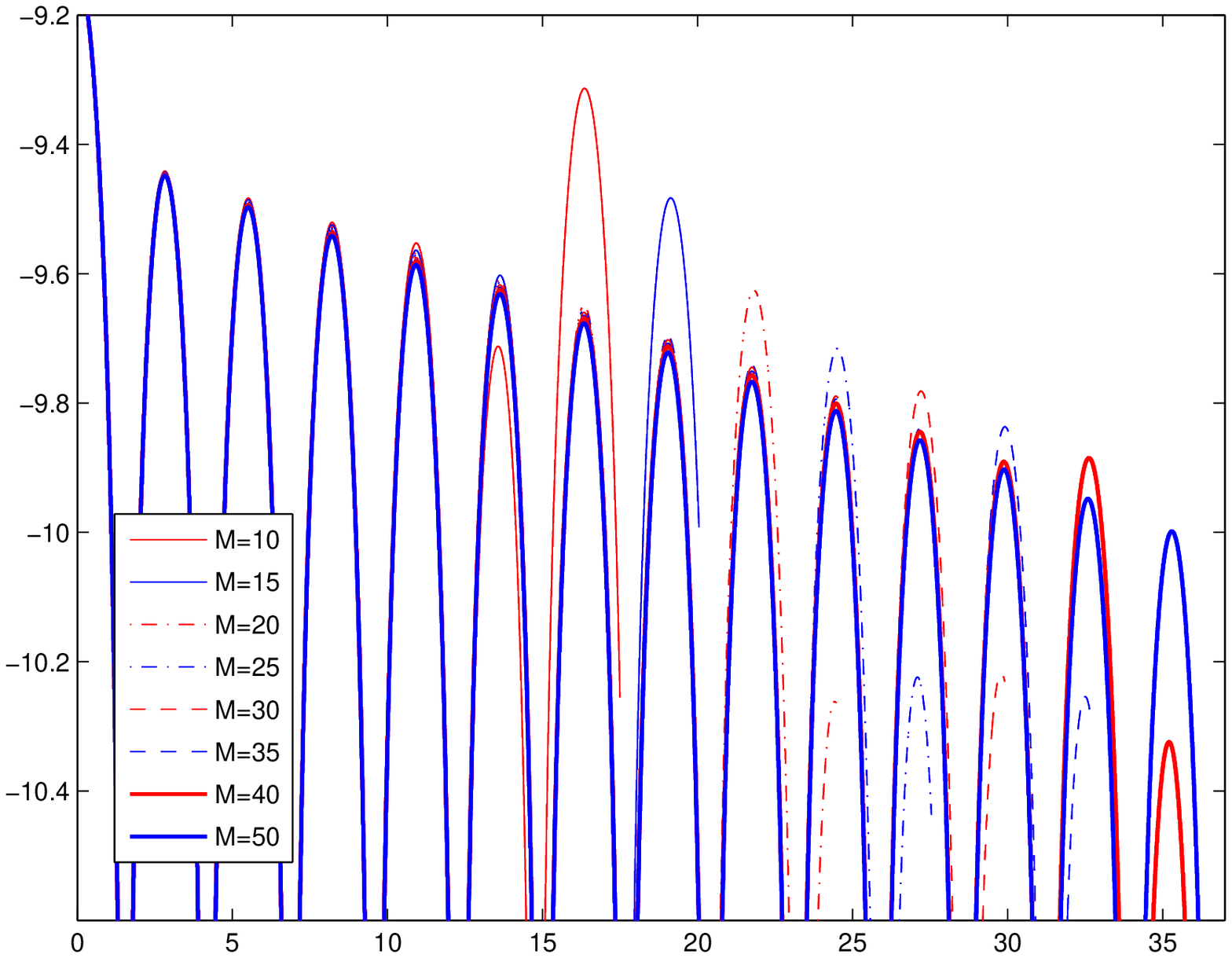}
\end{overpic}
}
\subfigure[$k=0.5$]{
\begin{overpic}[scale=.4]{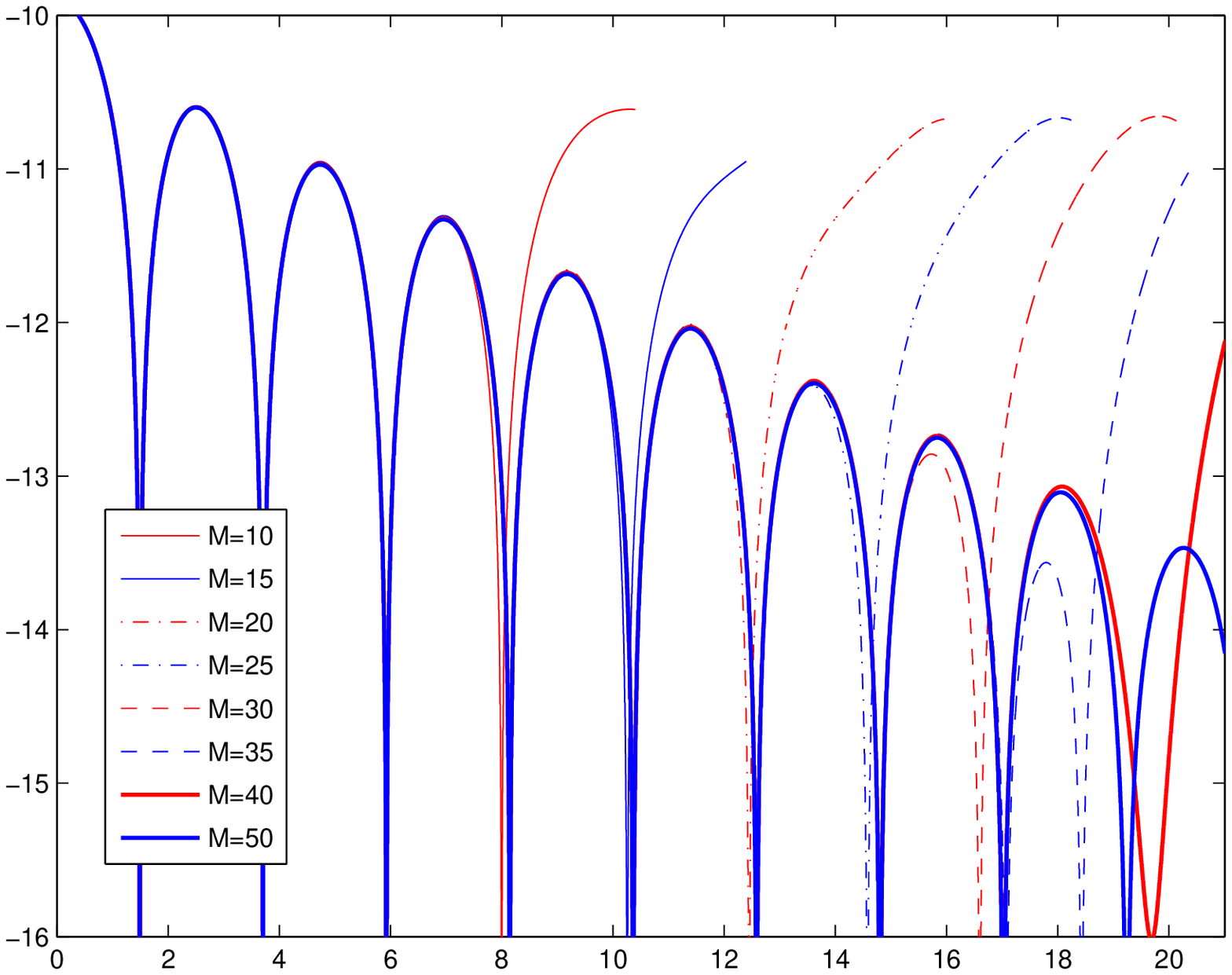}
\end{overpic}
}
\caption{Exponentially damping in time of the square root of
  $\mathcal{E}_h$ on different number of moments $M$ with the wave
  number $k=0.3$ and $0.5$. The curves are the square root of
  $\mathcal{E}_h$ in time using logarithm scale using different number
  of moments.}
\label{fig:moment}
\end{figure}

\begin{figure}[!ht]
  \centering
\subfigure[$k=0.3$]{
\begin{overpic}[scale=.4]{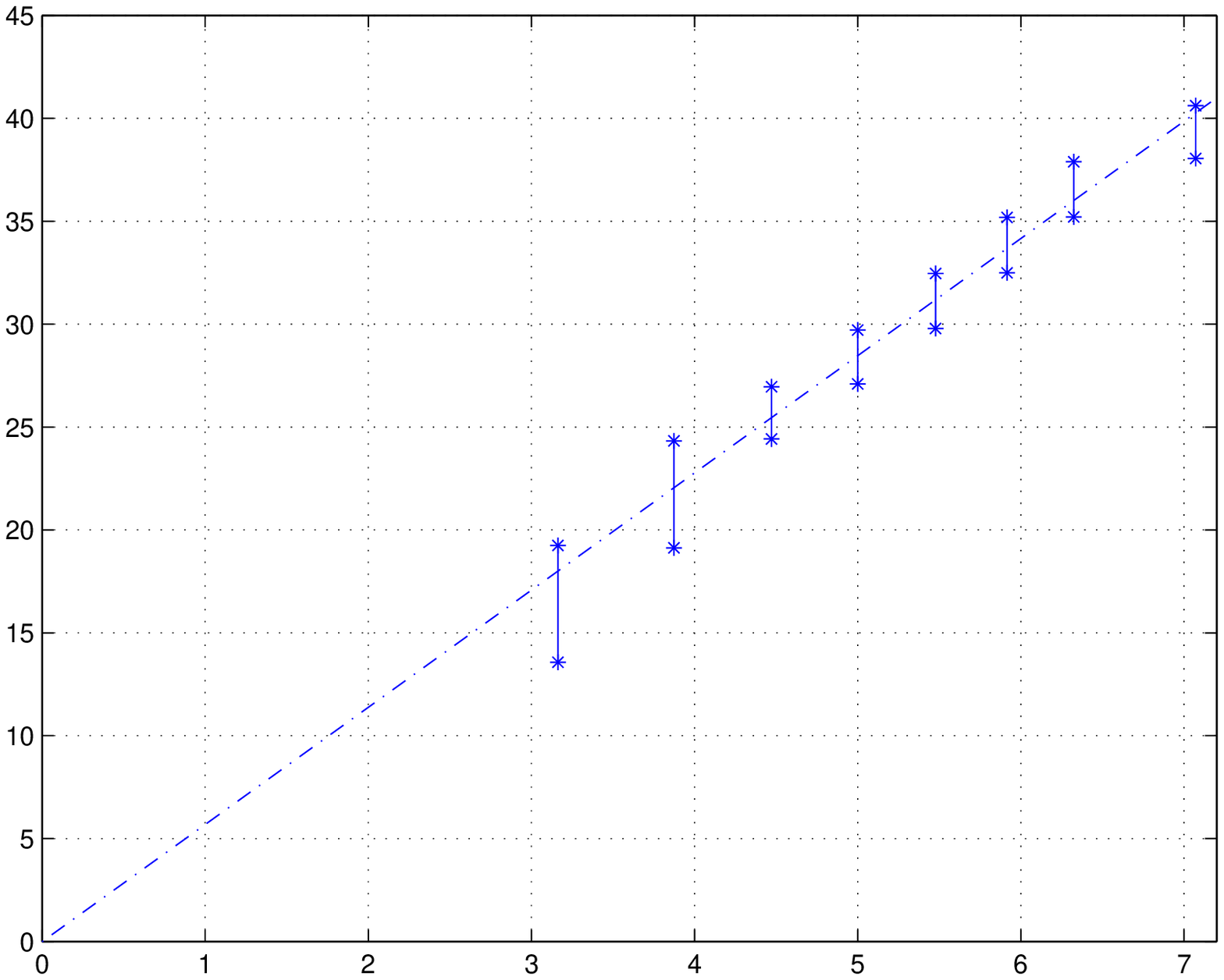}
\end{overpic}
}
\subfigure[$k=0.5$]{
\begin{overpic}[scale=.4]{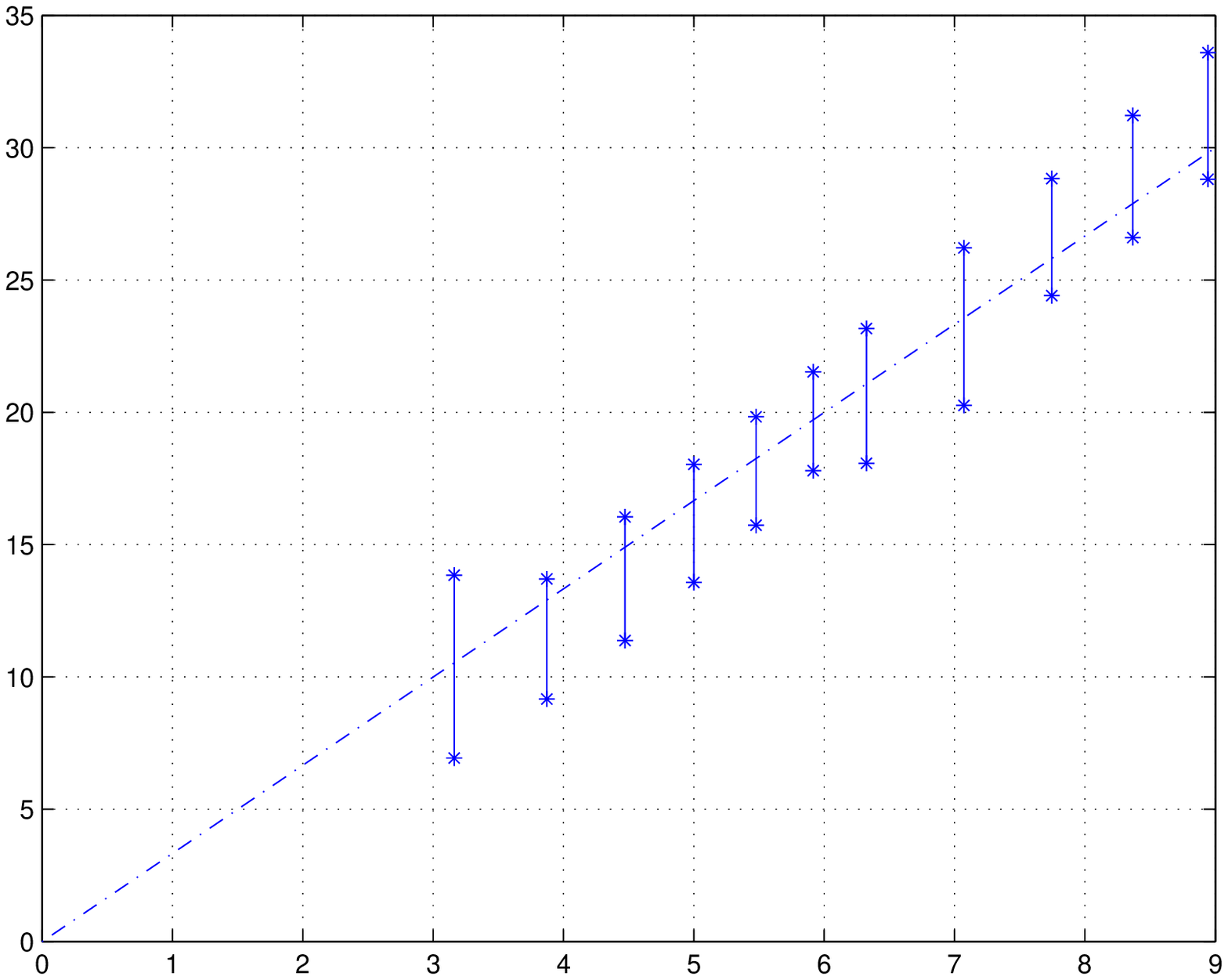}
\end{overpic}
}
\caption{The dependence of the recurrence time on the number of
  moments used. The $x$-axis is the square root of the number of
  moments $M$, and the $y$-axis is time. The recurrence time for a
  given $M$ is between the time of the sequential peak values of
  $\mathcal{E}_h$ before and after it deviates from the exponentially
  decay (see Figure \ref{fig:moment}). In the figures, the vertical
  line segments are given by two time of the sequential peak values.}
  \label{fig:recurrence}
\end{figure}

\begin{figure}[!ht]
\centering
\subfigure[$k=0.3$]{
\begin{overpic}[scale = .4]{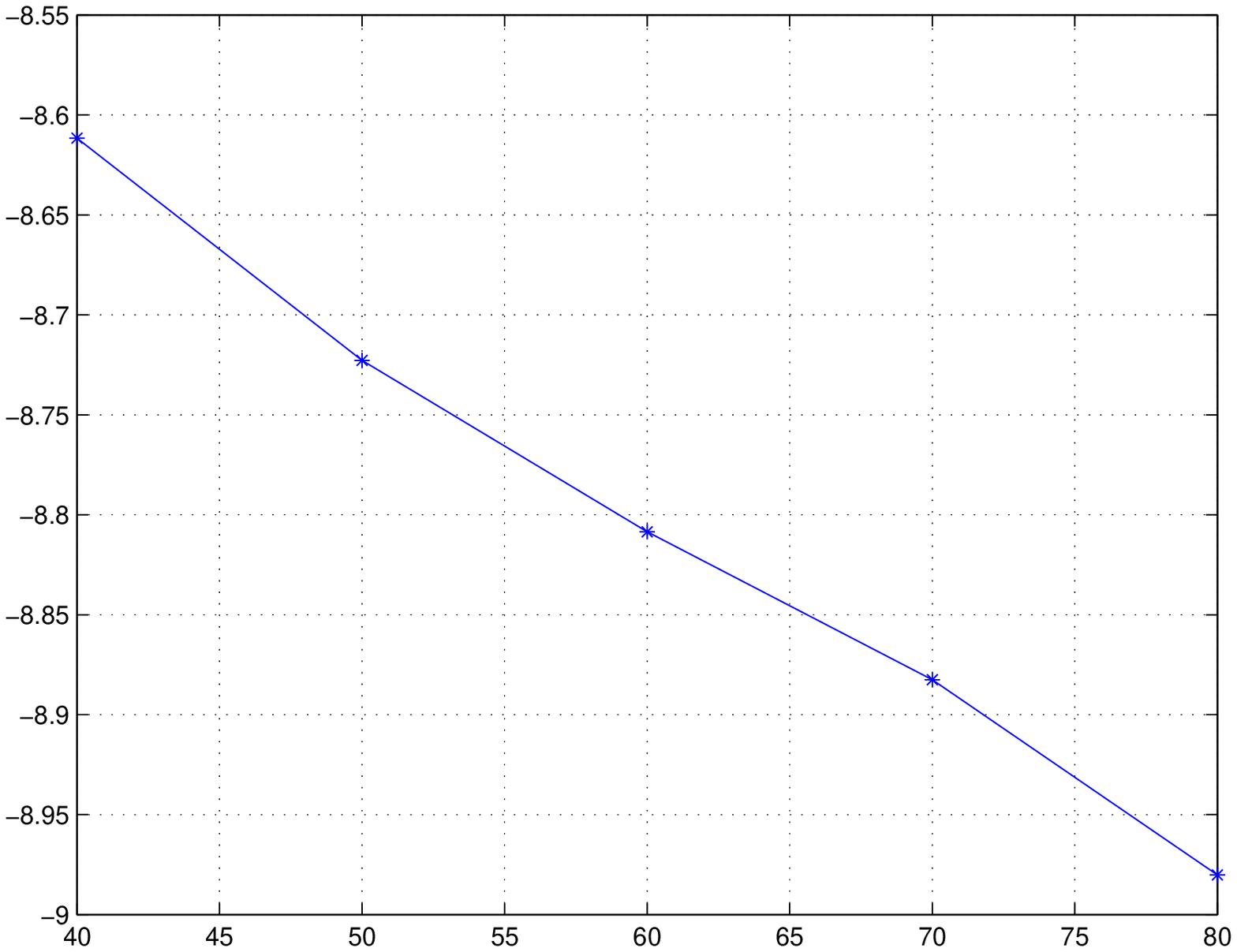}
\end{overpic}
}
\subfigure[$k=0.5$]{
\begin{overpic}[scale = .4]{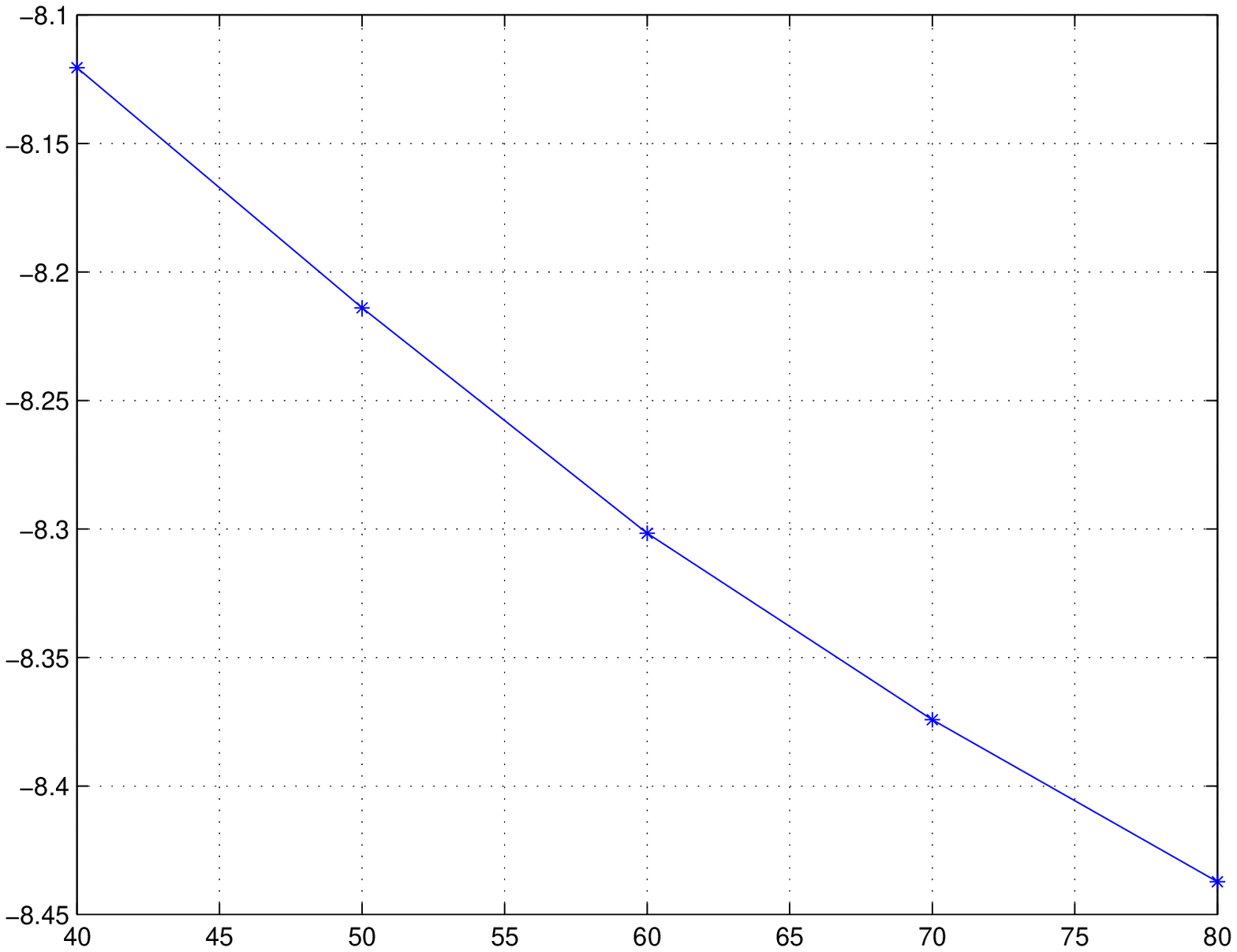}
\end{overpic}
}
\caption{The approximately exponential convergence of the numerical
  damping rates in the number of moments with the wave number $k=0.3$
  and $0.5$. The $x$-axis is the number of moments $M$, with $M_{i} =
  40, 50, 60, 70, 80, 90$ and $\Delta M = 10$. The $y$-axis is
  $\log(\gamma_L^h(M_{i}) - \gamma_L^h(M_{i-1}))$, where
  $\gamma_L^h(M_{i})$ is obtained by the least square fitting of the
  peak values of $\mathcal{E}_h$. The spatial grid size is fixed as
  $\Delta x = L/8000$.}
\label{fig:slope_mnt}
\end{figure}

Let us turn to the study of the numerical convergence in term of the
number of moments. Again the V-P equations in 1D spatial space and 1D
velocity space are numerically solved. The results using different
number of moments ranging from $10$ to $50$ are collected in Figure
\ref{fig:moment} with fixed number of spatial grids. In this figure,
we observe that the behavior of the time evolution of
$\mathcal{E}_h(t)$ for different number of moments are almost the same
at the beginning, indicating that the numerical damping rates are very
close to each other. The damping of the square root of
$\mathcal{E}_h(t)$  is persisting until the recurrence appears. After the
appearance of the recurrence, the evolving of $\mathcal{E}_h(t)$
deviates evidently from exponentially decaying.

The exponential convergence rate in the number of moments is expected
since the Hermite spectral expansion is used in the velocity space. We
notice that the exponential convergence in term of number of moments
is quite different to be observed, since the numerical error is in
linear convergence due to the dominance of the spatial discretization.
As the best try, a very fine spatial grid with $8000$ points, which is
about the maximal capacity of our current hardware, is used in the
computation to suppress the spatial discretization error. With the
fixed spatial grid size, we then take the numerical damping rate
$\gamma_L^h$ as a function of the number of moments $M$. In the case of an
exponential convergence, the dependence of $\gamma_L^h(M)$ on $M$ is
as
\begin{equation}\label{eq:ansatz_mnt}
\gamma_L^h(M) = \gamma_L^{m,0} + \gamma_L^{m,1} \lambda^{-M},
\end{equation} 
where $\gamma_L^{m,0}$, $\gamma_L^{m,1}$ and $\lambda$ are parameters
independent of $M$. Let $M_i$ be a given arithmetic sequence with a
const $\Delta M = M_{i} - M_{i-1}$. Based on the ansatz
\eqref{eq:ansatz_mnt}, we have that
\[
\begin{array}{rcl}
\gamma_L^h(M_i) &=& \gamma_L^{m,0} + \gamma_L^{m,1} \lambda^{-M_i}, \\ [2mm]
\gamma_L^h(M_{i-1}) &=& \gamma_L^{m,0} + \gamma_L^{m,1} \lambda^{-M_{i-1}}.
\end{array}
\]
By substracting the two equations and then taking a logarithm on both
sides, we have
\[
\log \left(\gamma_L^h(M_i) - \gamma_L^h(M_{i-1}) \right) = -M_i
\log(\lambda) + \log \gamma_L^{m,1} (1 - \lambda^{\Delta M}).
\]  
It is to find that $\log \left(\gamma_L^h(M_i) - \gamma_L^h(M_{i-1})
\right)$ is approximately linear in $M_i$ if \eqref{eq:ansatz_mnt} is
valid. We give the plot of $\log \left(\gamma_L^h(M_i) -
  \gamma_L^h(M_{i-1}) \right)$ in $M_i$ for $M_i = 40$, $50$, $60$,
$70$, $80$, and $90$ in Figure \ref{fig:slope_mnt}. It is clear in
this figure that $\log \left(\gamma_L^h(M_i) - \gamma_L^h(M_{i-1})
\right)$ is almost linearly related with $M_i$, indicating that
\eqref{eq:ansatz_mnt} is valid and the numerical convergence rate in
term of the number of moments is approximately exponential.

It has been pointed out in \cite{Eric} that the recurrence is an
essential phenomenon of a class of numerical methods for the Vlasov
equations. Such a phenomenon is mainly the effect of the free
streaming part of the Vlasov equation. In the 1D case, it is
\begin{equation}
\pd{f}{t} + v \pd{f}{x} = 0.
\end{equation}
It is not difficult to find that for any velocity $v_i$, we have $f(j
L / v_i, x, v_i) = f(0, x, v_i)$ for an arbitrary integer $j$ when the
periodic boundary condition is imposed.  Therefore, if a numerical
scheme approximates the distribution function by taking its values on
$v_0, \cdots, v_M$ in the velocity space, then, for a time $T$ such
that most of $T / (L / v_i)$, $i = 0,\cdots,M$ are close to some
positive integers, the discrete distribution function $f(T, x, v)$
will be close to the initial setting $f(0, x, v)$. Thus the recurrence
occurs.

Our method is not able to escape away from this trap, either, since as
pointed out in \cite{Fan}, the hyperbolic moment equation is similar
as a discrete velocity model with a shifted and scaled stencil. Based
on our observation in Figure \ref{fig:recurrence}, it is conjectured
that the recurrence time is proportional to the square root of the
moment expansion order $M$. For a finite difference discretisation in
the velocity space, the recurrence time is proportional to the grid
size in $\bv$ \cite{Eric}. Noticing that the minimal distance between
the zeros of $M$-th degree Hermite polynomial is around $\sqrt{M}$,
the linear dependence of the recurrence time on $\sqrt{M}$ consists
with the analysis in \cite{Eric} if we regard the moment method as a
collocation spectral method in the velocity space with the collocation
points being the zeros of Hermite polynomial.

It has been proved in Theorem 1 that the mass and the momentum are
conserved by our scheme, while the total energy is not conserved. To
examine the behavior of the total energy of our method, we present in
Figure \ref{fig:energy} the variation of the total energy in time in
serveral different setups. It is clear that the total energy of the
numerical solution produced by our method is changed very slightly in
the whole computation. 

\begin{figure}[!ht]
\centering
\subfigure[$k=0.2$]{
\begin{overpic}[width=.45\textwidth,,height=5cm]{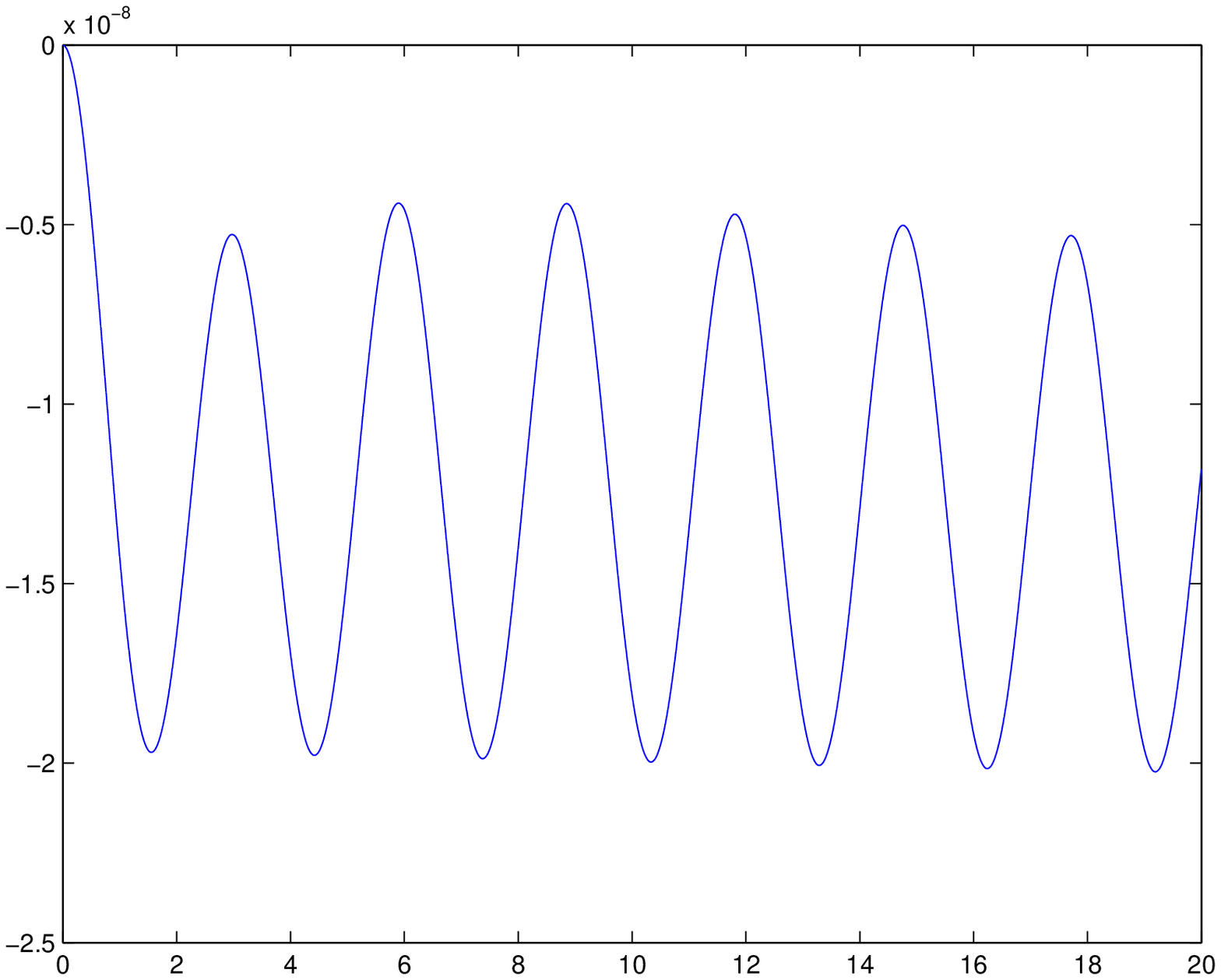}
\end{overpic}
}
\subfigure[$k=0.3$]{
\begin{overpic}[width=.45\textwidth,,height=5cm]{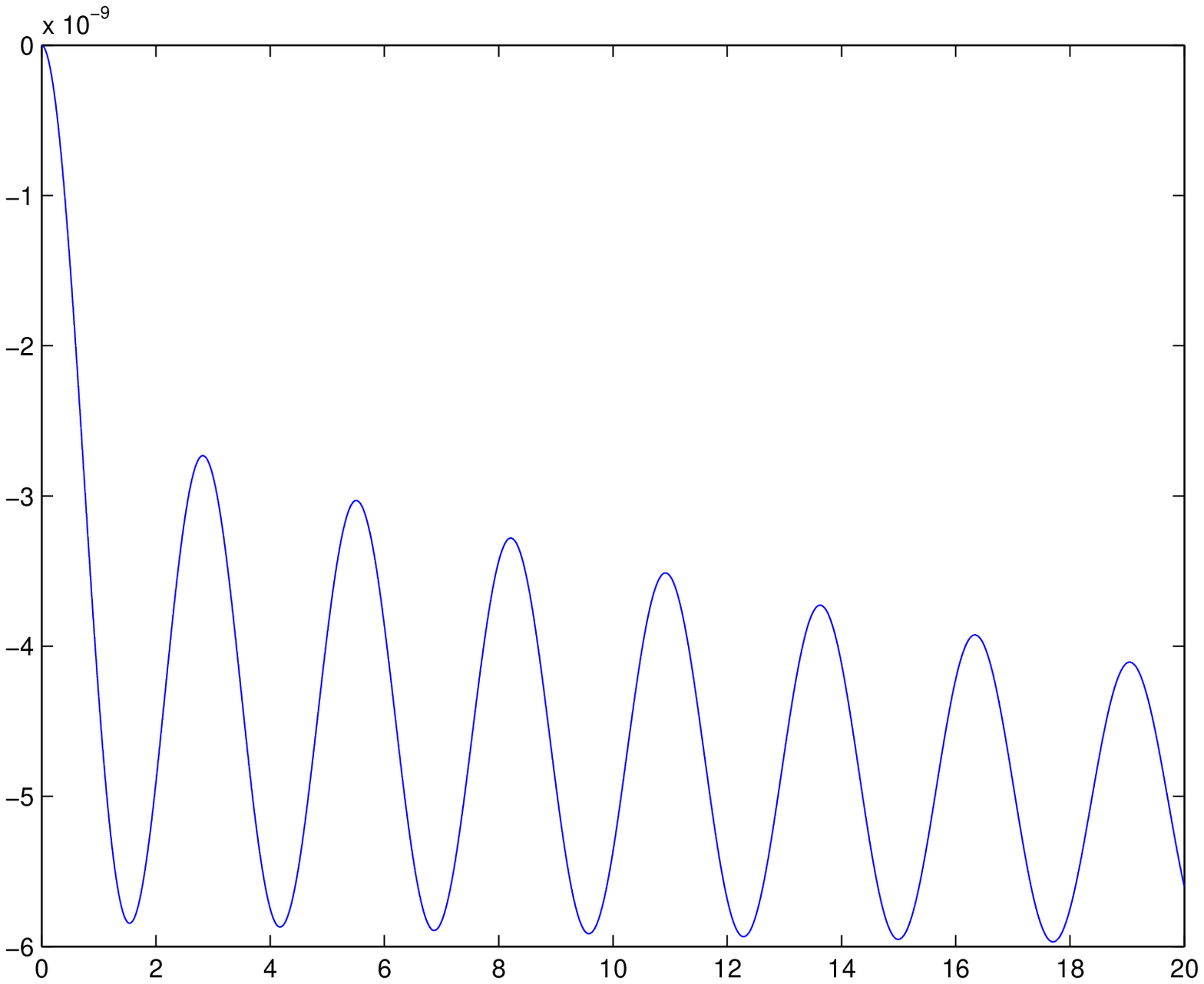}
\end{overpic}
}\\
\subfigure[$k=0.4$]{
\begin{overpic}[width=.45\textwidth,,height=5cm]{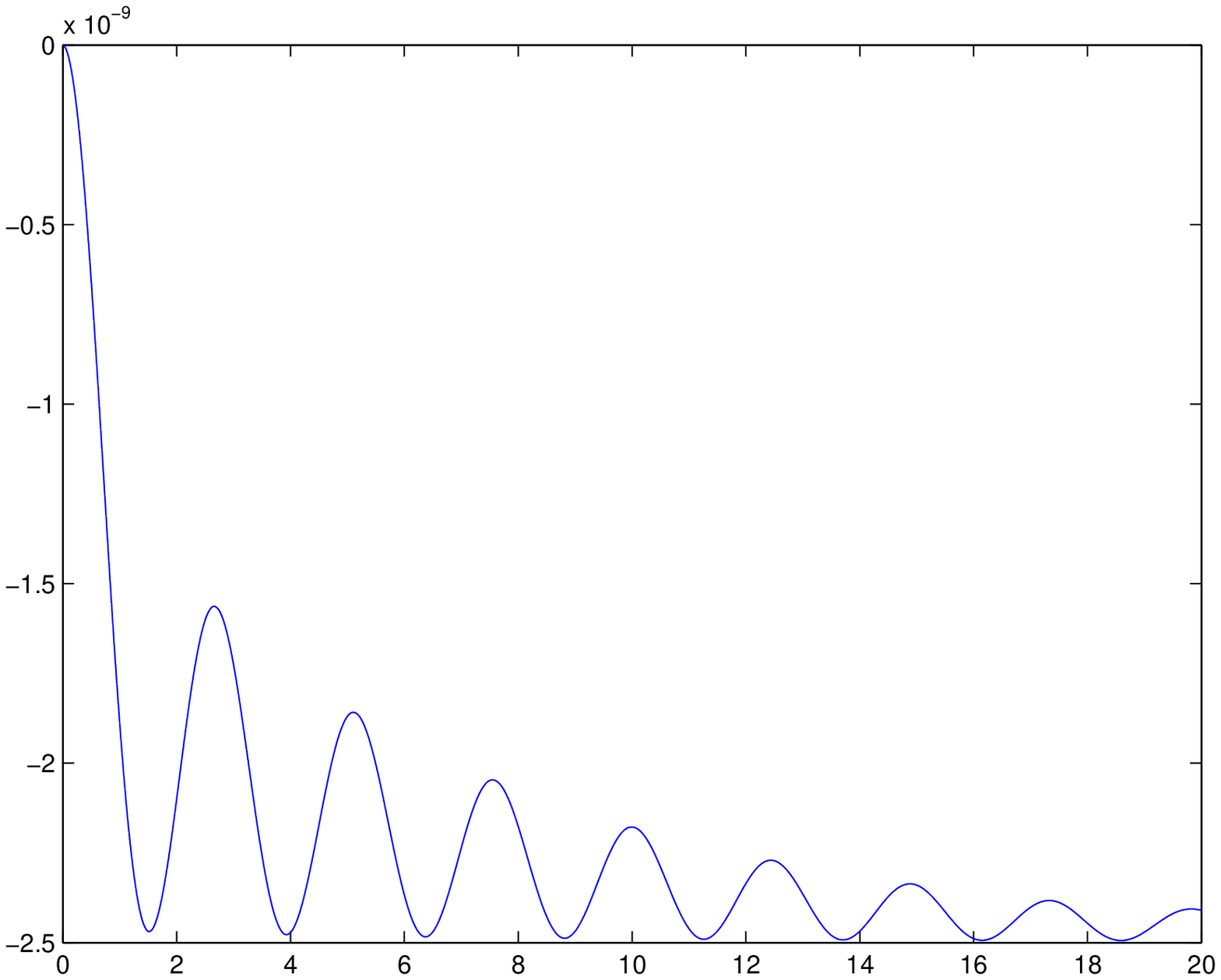}
\end{overpic}
}
\subfigure[$k=0.5$]{
\begin{overpic}[width=.45\textwidth,height=5cm]{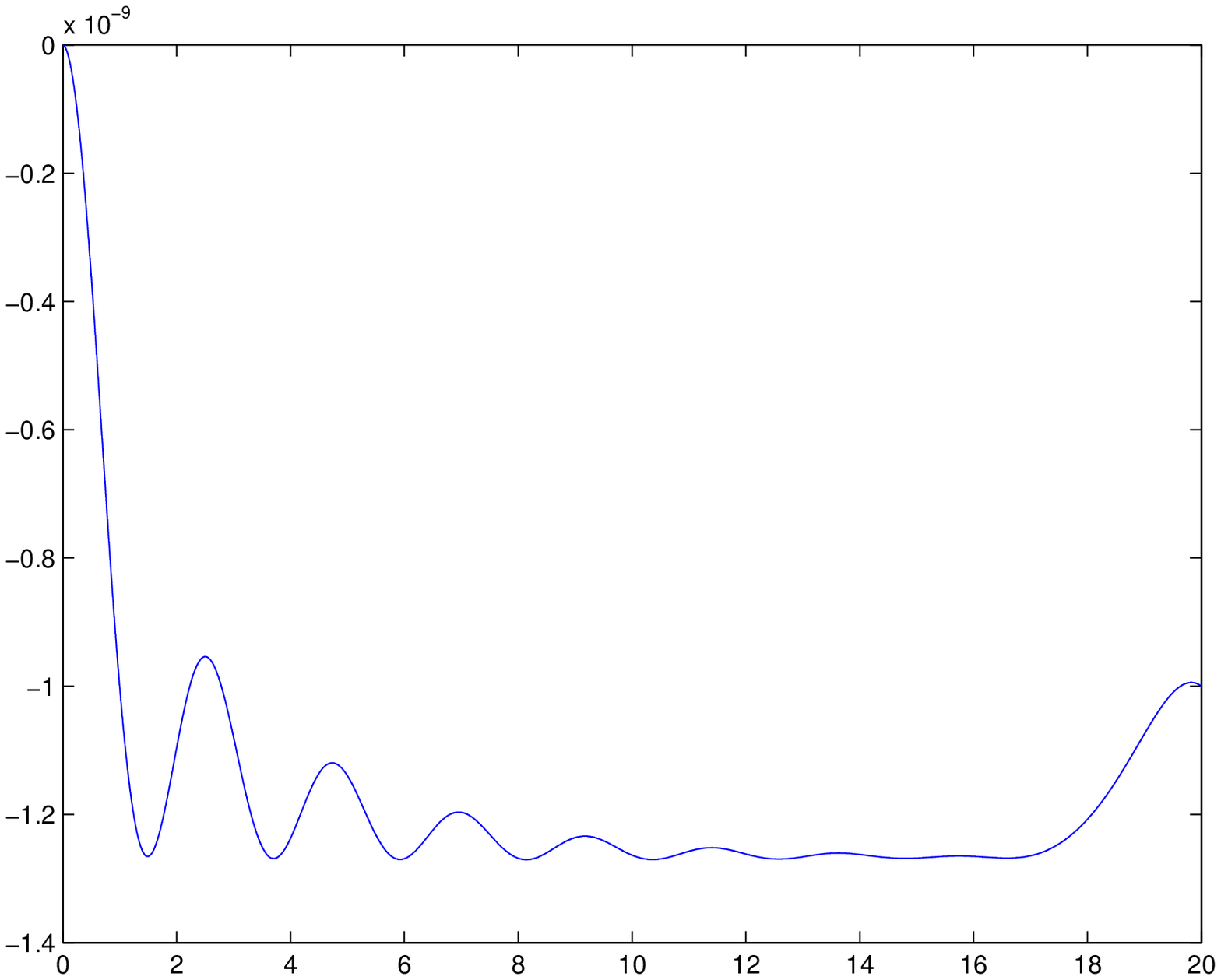}
\end{overpic}
}
\caption{The variation of the total energy $\mathcal{E}_{total}$ in
  time in the collisionless case with the wave number $k=0.2$, $0.3$,
  $0.4$ and $0.5$. The $x$-axis is time and the $y$-axis is
  $\mathcal{E}_{total}(t) - \mathcal{E}_{total}(0)$. Noticing that
  $\mathcal{E}_{total}(0)$ is of $O(1)$, it is clear  that  the
  variation is very small, though the total
  energy is not conserved.}
\label{fig:energy}
\end{figure} 

\subsection{Parameter study of linear Landau damping} 

\begin{figure}[!ht] \centering 
    \subfigure[$\nu = 0,k=0.2$]{
    \begin{overpic}[width=.45\textwidth,height=5cm]{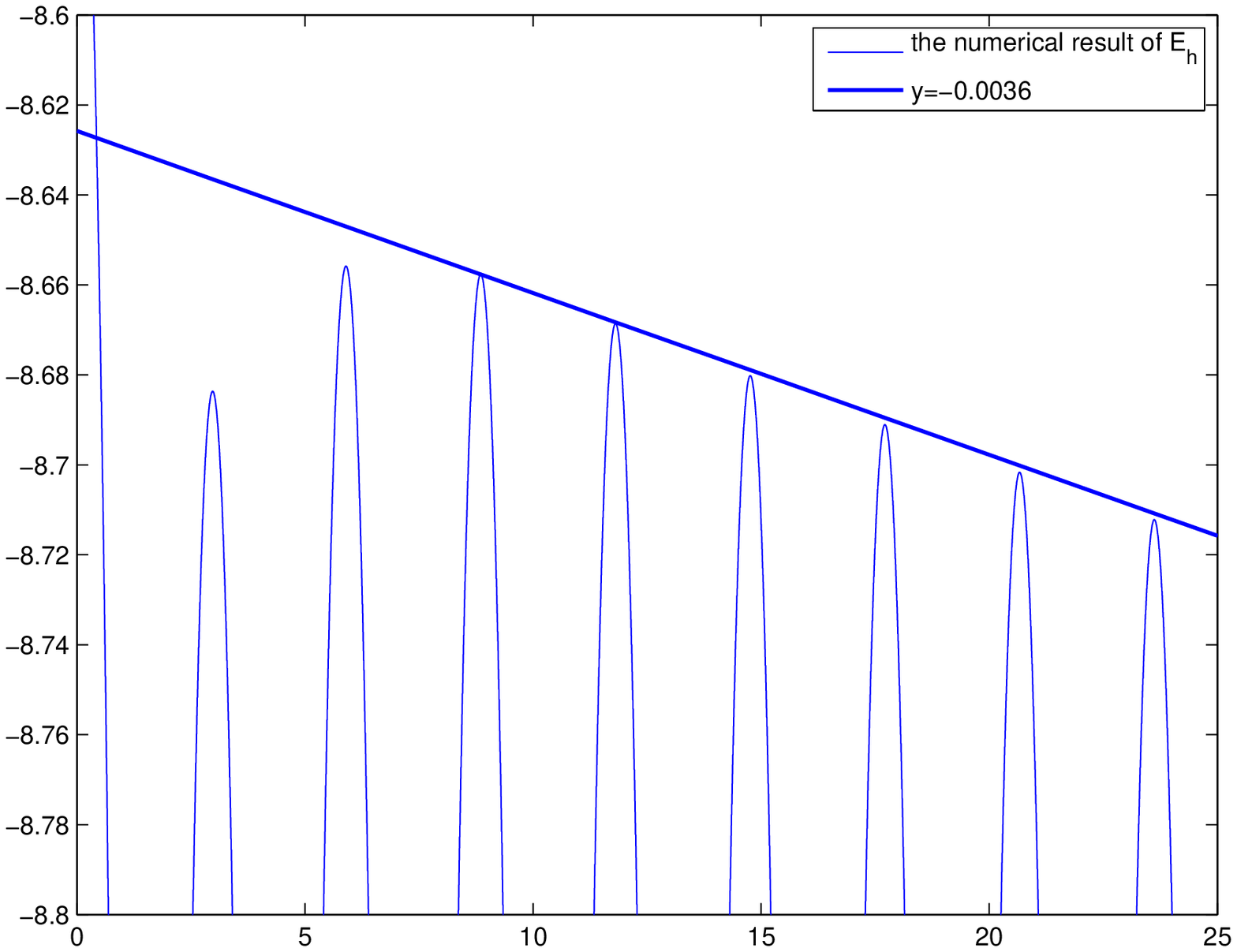}
    \end{overpic} } 
\subfigure[$\nu = 0,k=0.3$]{
    \begin{overpic}[width=.45\textwidth,height=5cm]{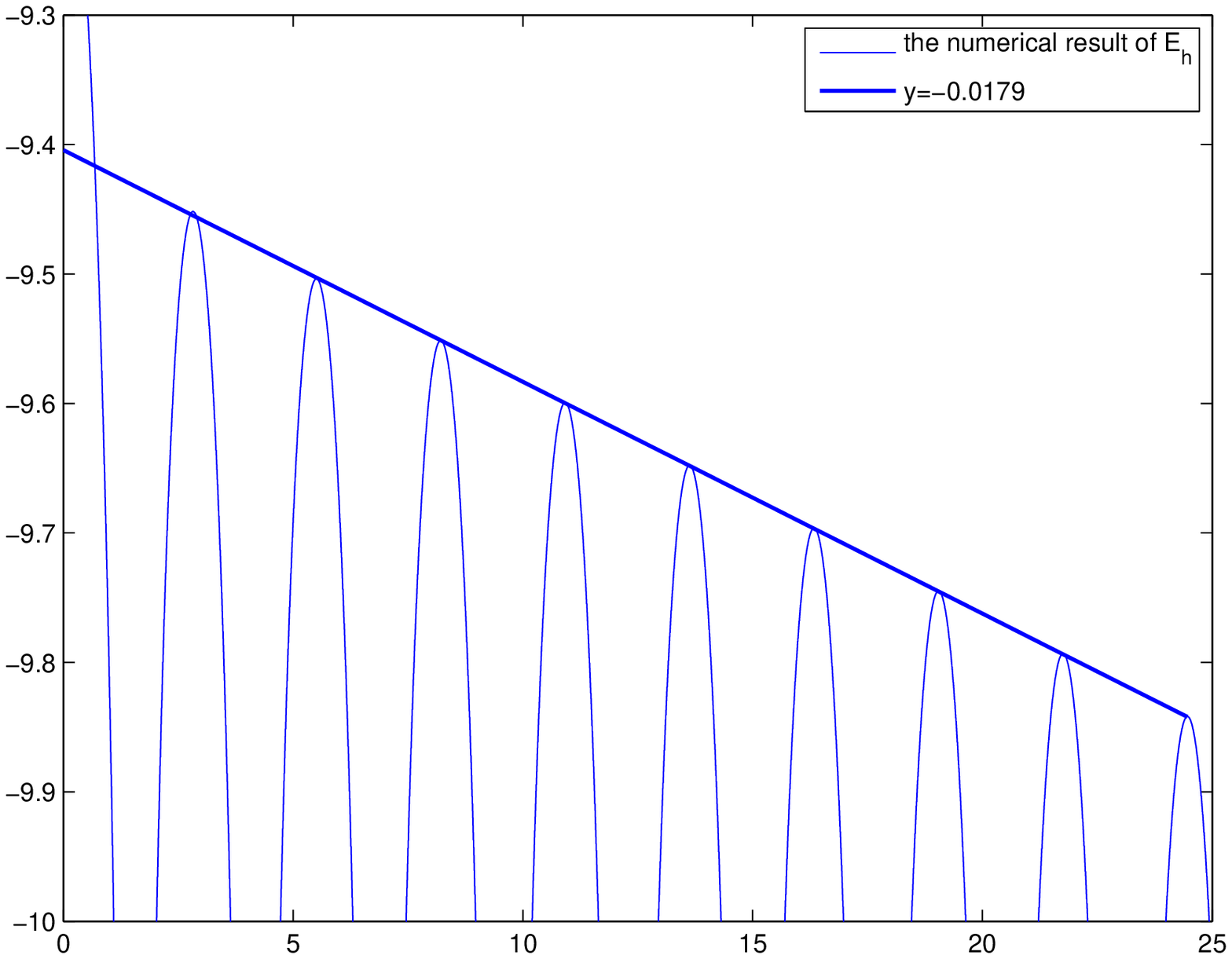}
    \end{overpic} }\\
 \subfigure[$\nu=0,k=0.4$]{
    \begin{overpic}[width=.45\textwidth,height=5cm]{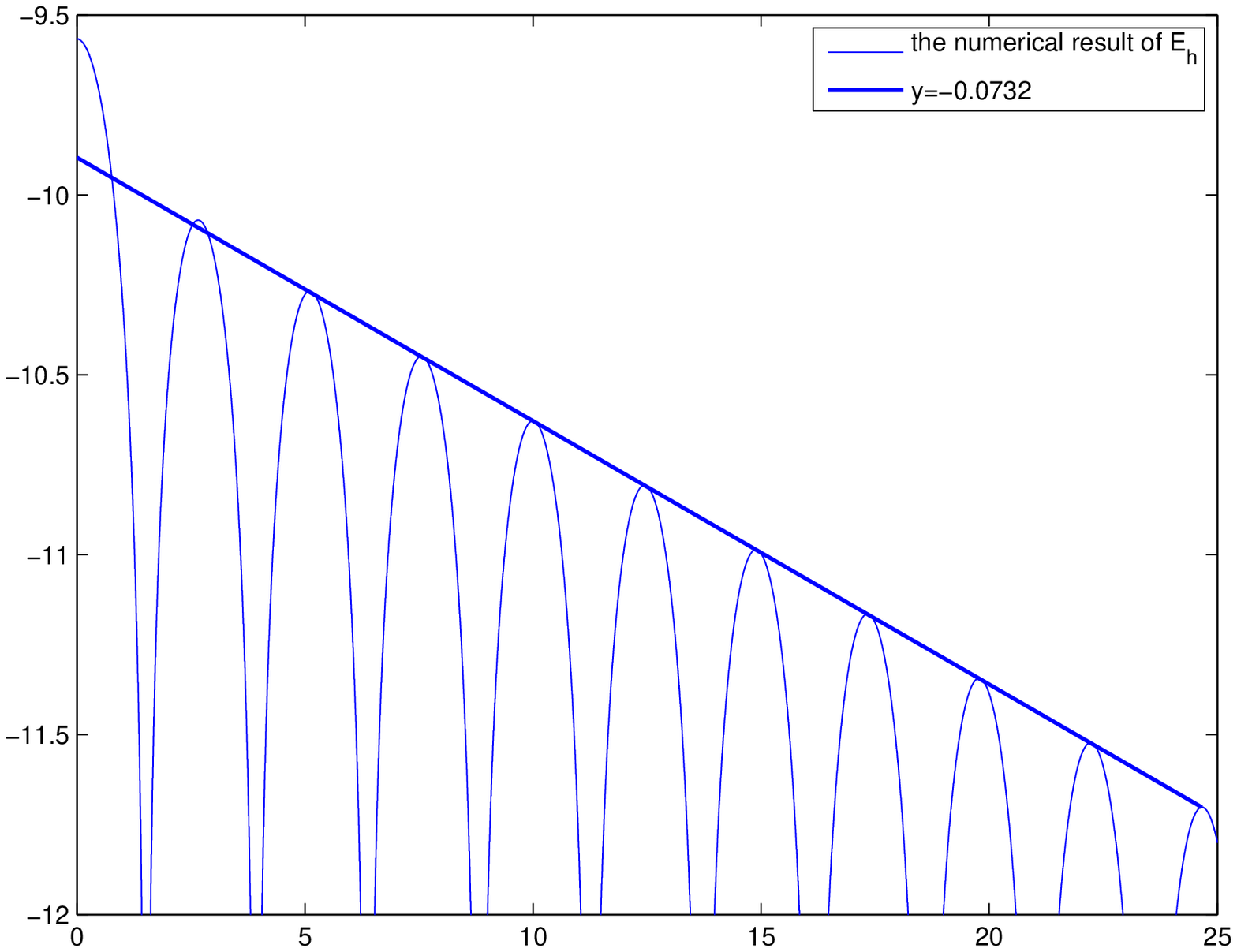}
    \end{overpic} } 
\subfigure[$\nu=0,k=0.5$]{
  \begin{overpic}[width=.45\textwidth,height=5cm]{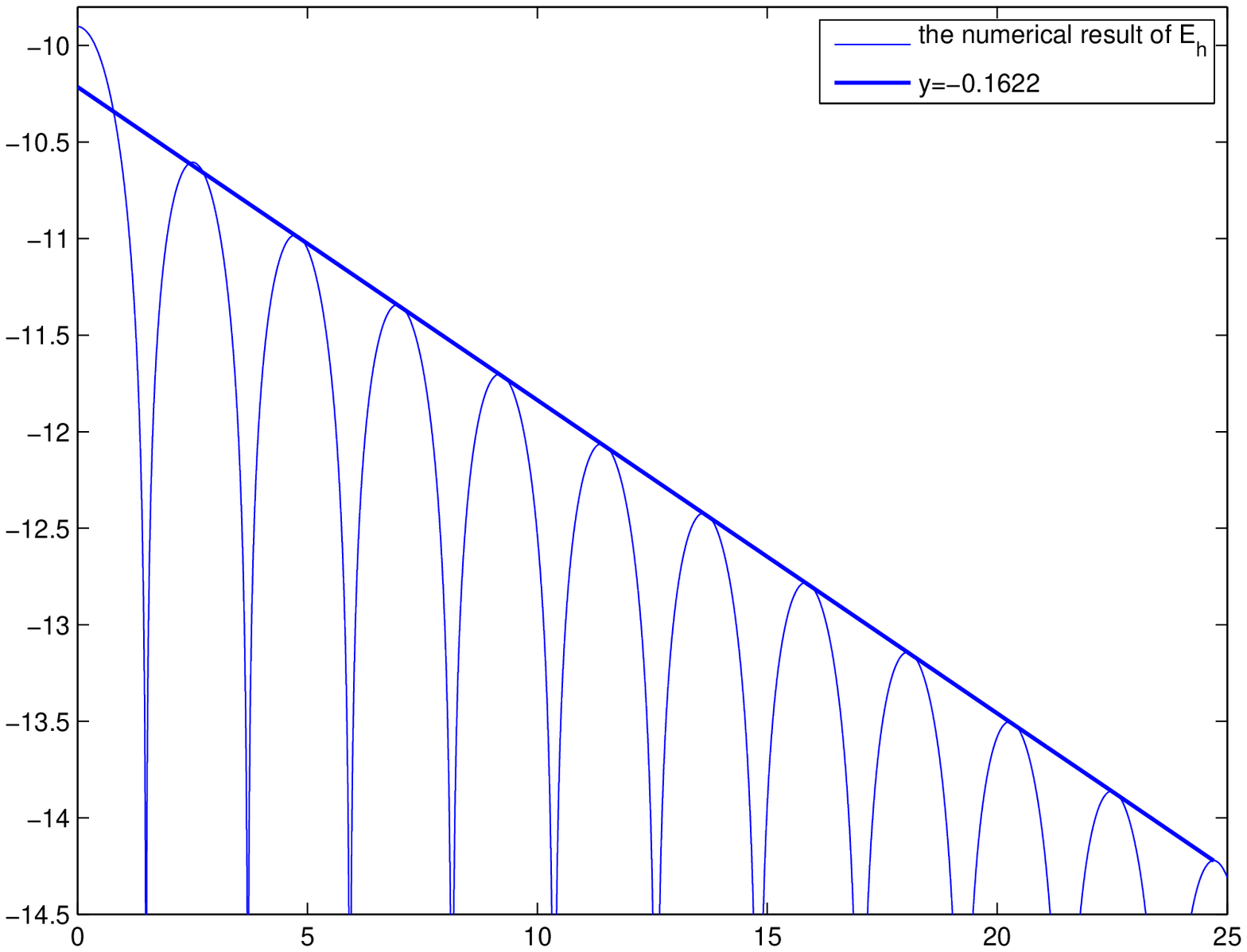}
\end{overpic} }
\caption{The dependence of the damping rate on the wave number $k$ in the
  collisionless case. The curves are the evolving of the square root
  of $\mathcal{E}_h$ in time using logarithm scale. The slope of the
  line is obtained by the least square fitting.}
\label{fig:no_collision}
\end{figure}

\begin{figure}[!ht]
\centering
\subfigure[$\nu = 0.01,k=0.2$]{
\begin{overpic}[width=.45\textwidth,height=5cm]{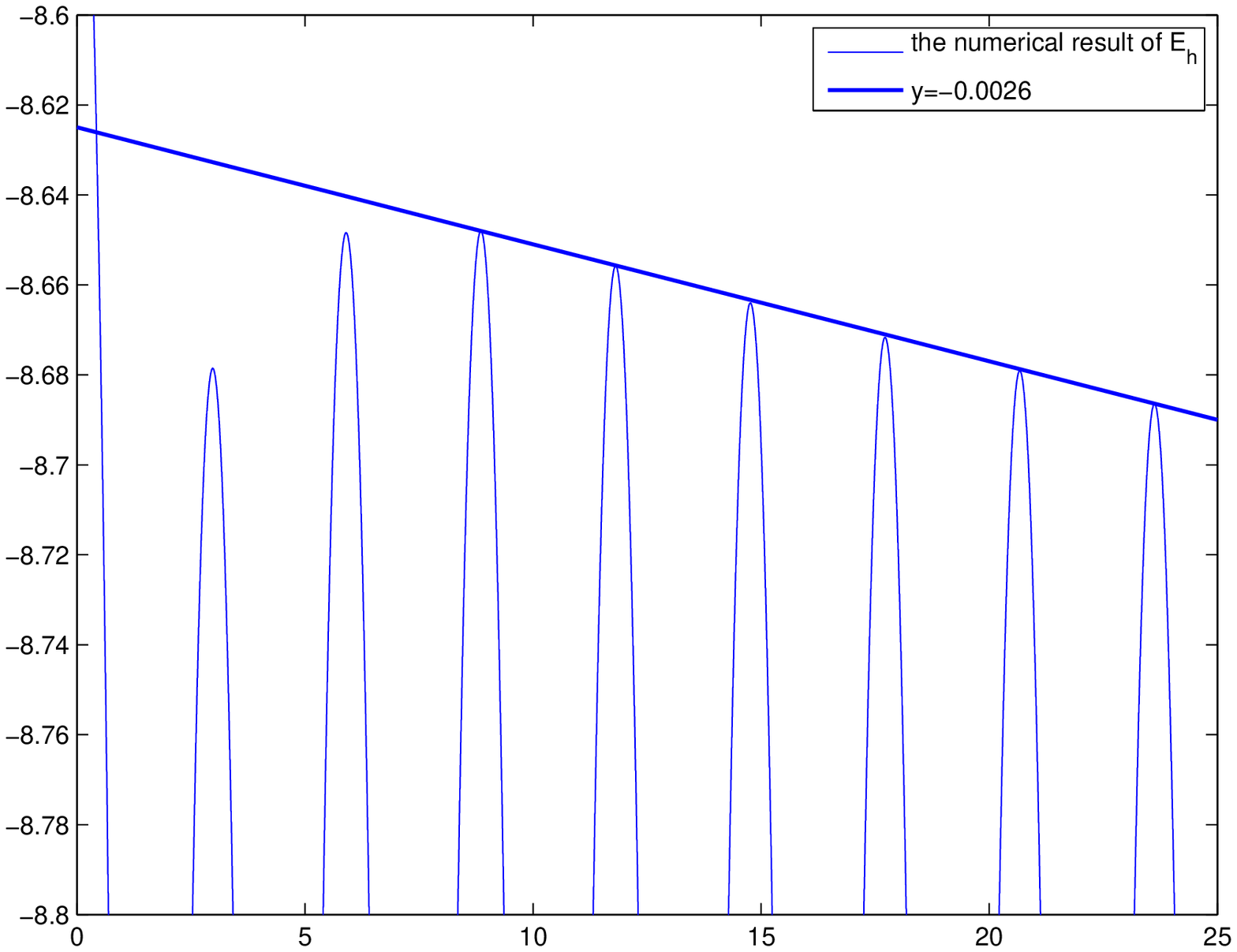}
\end{overpic}
}
\subfigure[$\nu =  0.01,k=0.3$]{
\begin{overpic}[width=.45\textwidth,height=5cm]{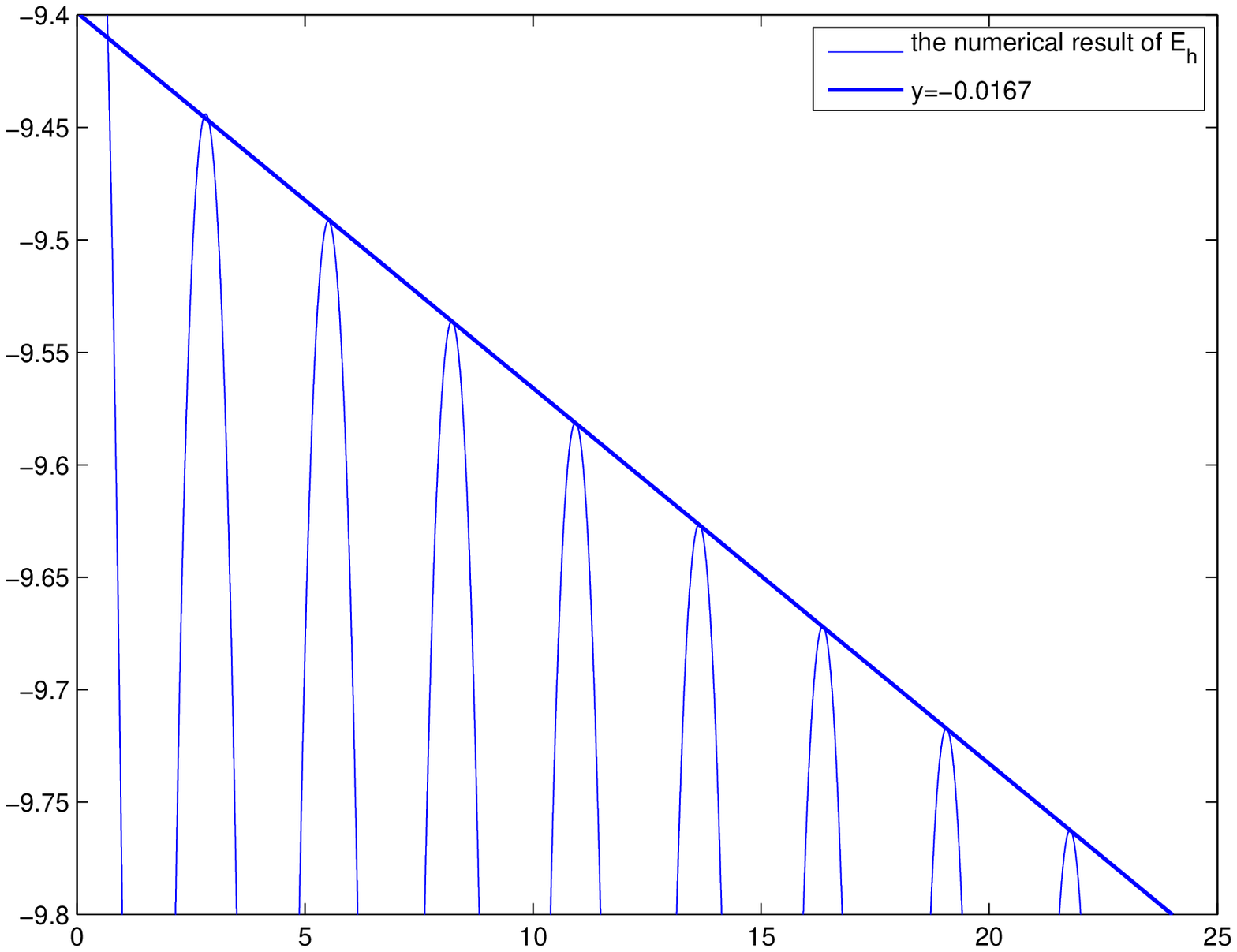}
\end{overpic}
}\\
\subfigure[$\nu = 0.01,k=0.4$]{
\begin{overpic}[width=.45\textwidth,height=5cm]{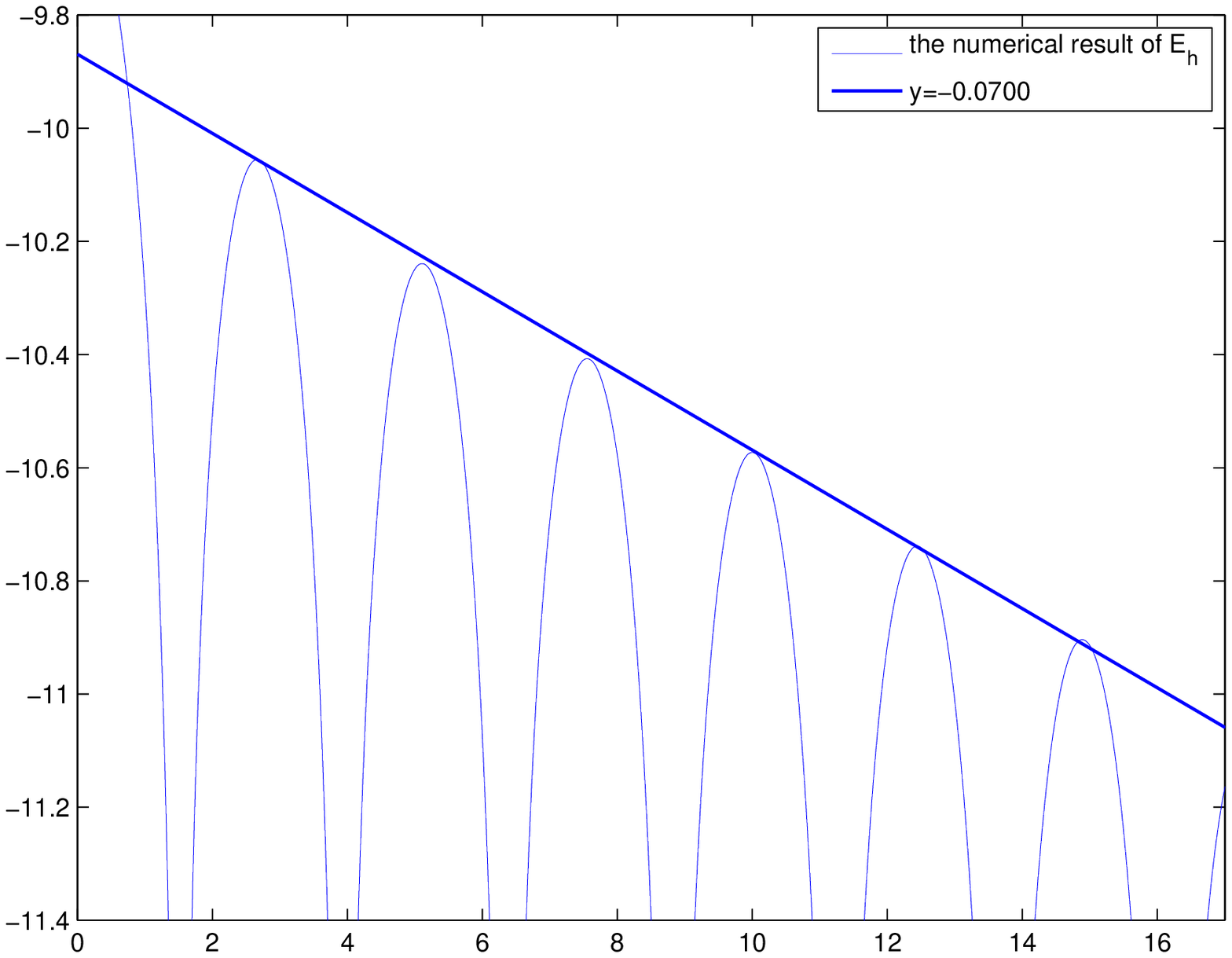}
\end{overpic}
}
\subfigure[$\nu = 0.01,k=0.5$]{
\begin{overpic}[width=.45\textwidth,height=5cm]{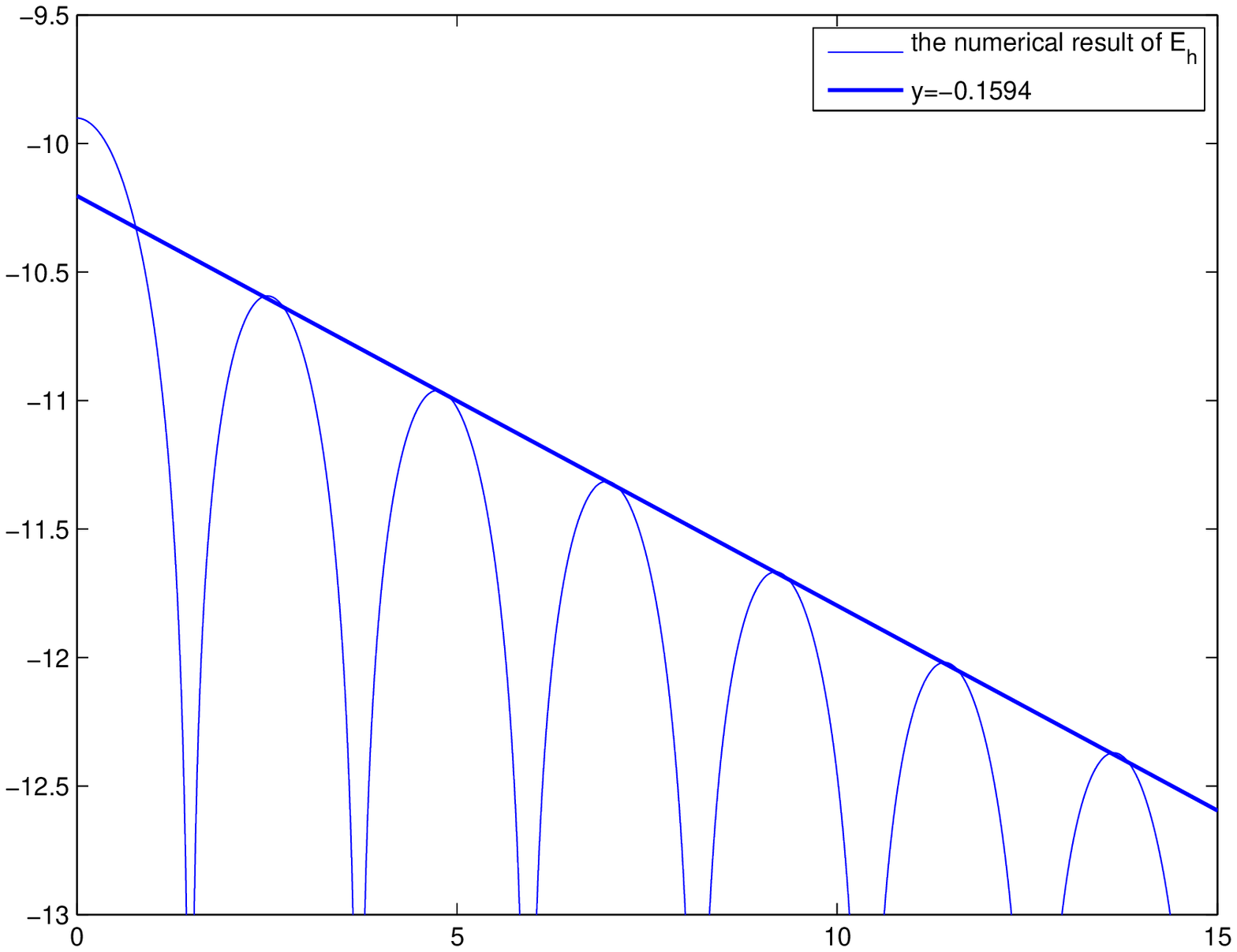}
\end{overpic}
}
\caption{The dependence of the damping rate on the wave number $k$ with
  collision frequency $\nu = 0.01$. The curves are the evolving of the
  square root of $\mathcal{E}_h$ in time using logarithm scale. The
  slope of the line is obtained by the least square fitting.}
\label{fig:collision}
\end{figure}

\begin{figure}[!ht] \centering
\subfigure[$\nu = 0.05,k=0.2$]{
\begin{overpic}[width=.45\textwidth,height=5cm]{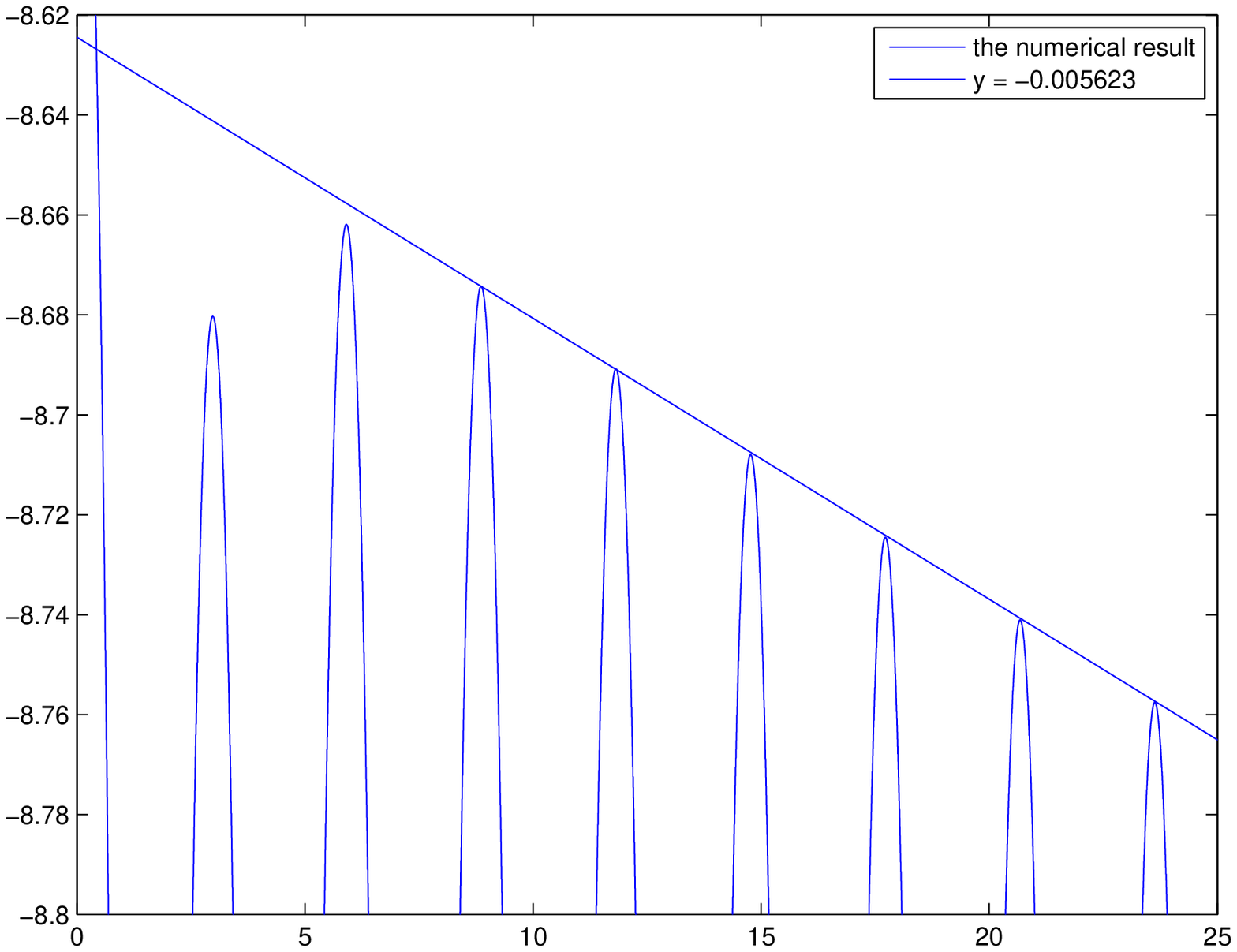}
\end{overpic}
}
\subfigure[$\nu = 0.05,k=0.3$]{
\begin{overpic}[width=.45\textwidth,height=5cm]{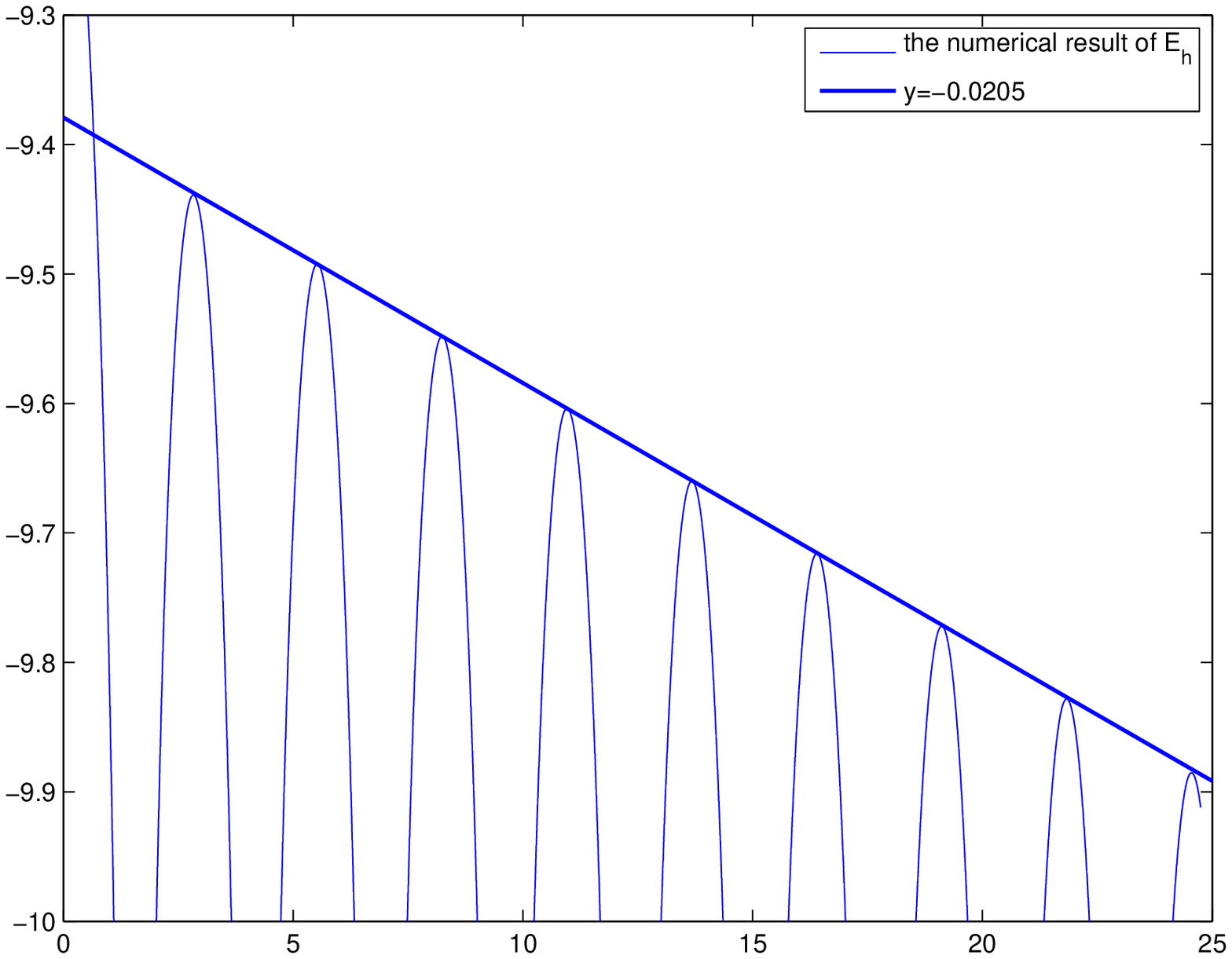}
\end{overpic}
} \\
\subfigure[$\nu = 0.05,k=0.4$]{
\begin{overpic}[width=.45\textwidth,height=5cm]{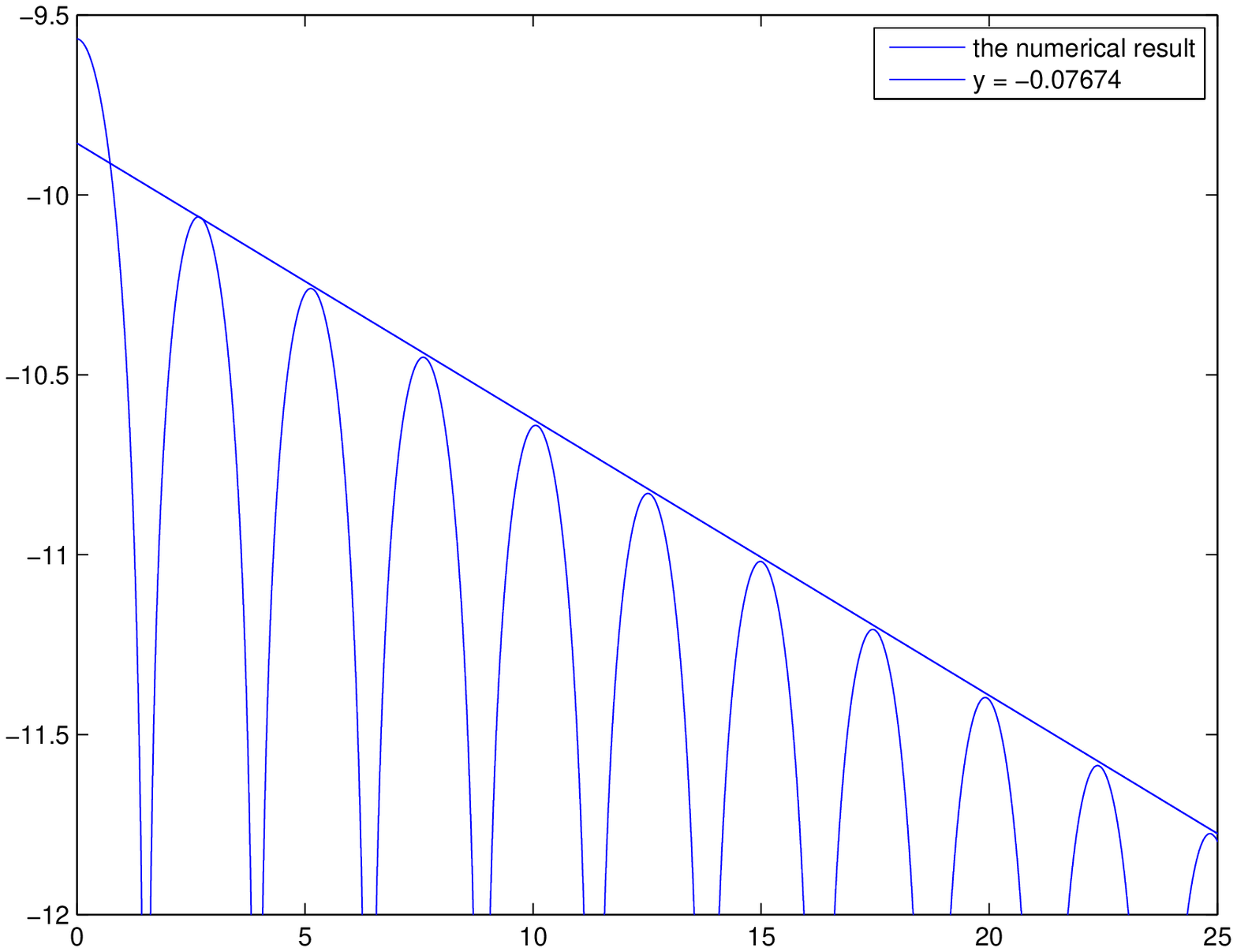}
\end{overpic}
}
\subfigure[$\nu = 0.05,k=0.5$]{
\begin{overpic}[width=.45\textwidth,height=5cm]{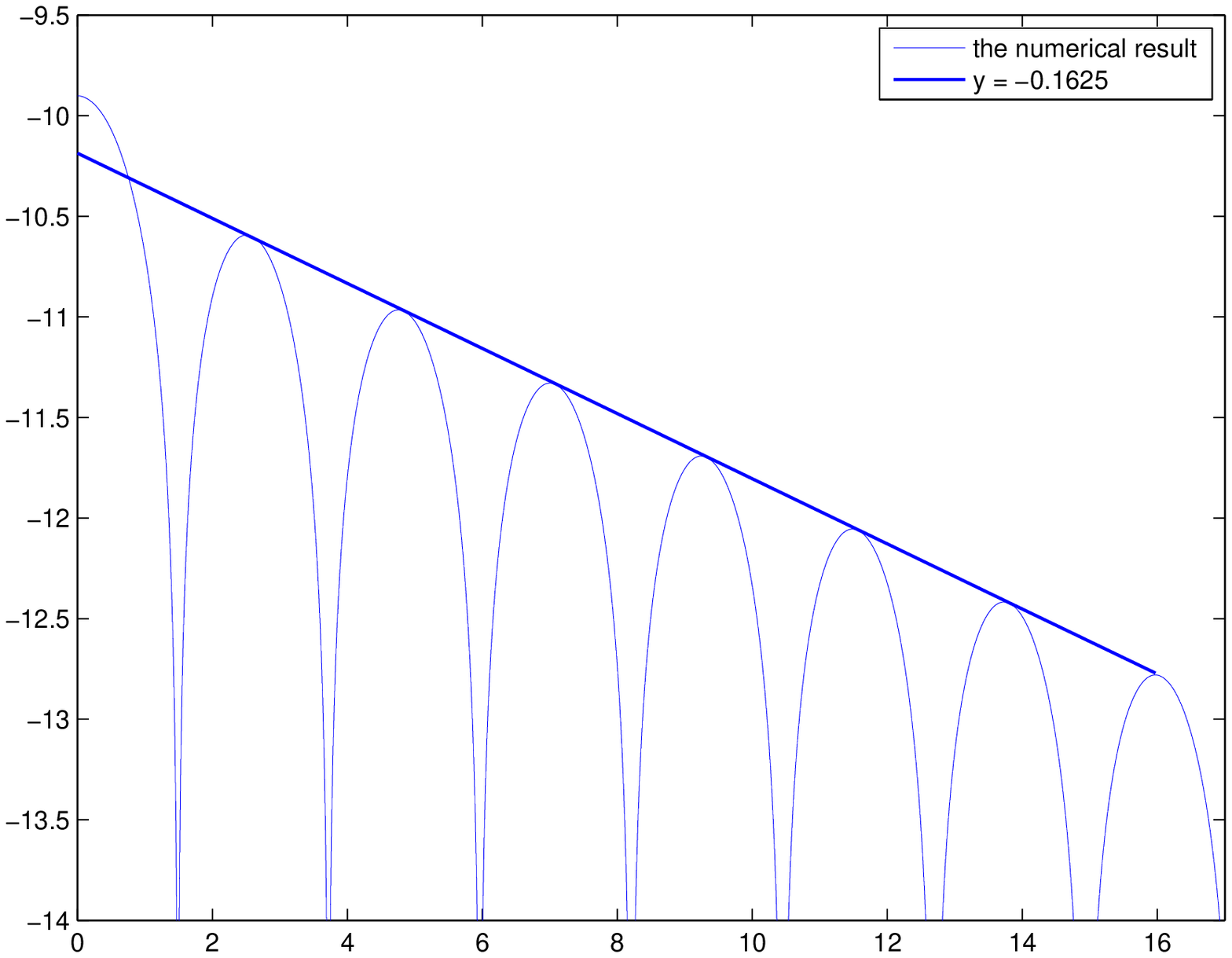}
\end{overpic}
}
\caption{The dependence of the damping rate on the wave number $k$ with
  collision frequency $\nu = 0.05$. The curves are the evolving of the
  square root of $\mathcal{E}_h$ in time using logarithm scale. The
  slope of the line is obtained by the least square fitting.}
\label{fig:nu=0.05}
\end{figure}

The Landau damping with different wave numbers and collision
frequencies are studied in this section. It is found that the
numerical phenomena here are exactly the same as the collisionless
case presented in Section \ref{sec:convergence}, for the wave numbers
ranging from $0.2$ to $0.5$ and the collision frequencies ranging from
$0.01$ to $0.05$. We have checked the numerical convergence both in
spatial grid size and order of moment expansion, and satisfied results
are obtained. In all cases, the exponential damping of the square root
of $\mathcal{E}_{h}(t)$ is observed.

Let us study the behavior of the solution with different wave numbers
firstly. We use $3000$ spatial grids and $80$ moments to compute the
evolution of $\mathcal{E}_h(t)$ for the wave numbers $k =0.2$, $0.3$,
$0.4$, and $0.5$ with the collision frequency $\nu = 0$, and the
results are in Figure \ref{fig:no_collision}. It is clear the
$\mathcal{E}_{h}(t)$ is damping exponentially in time and the damping
rate is increasing with greater wave number. This consists with the
empirical formula given in \cite{Filbet}.

In Figure \ref{fig:collision}, the evolution of the square root of
$\mathcal{E}_{h}$ with the wave number $k=0.2$, $0.3$, $0.4$ and $0.5$
with the collision frequency $\nu = 0.01$ is presented. We find that
$\mathcal{E}_{h}(t)$ is also damping exponentially but with greater
rates than that in the collisionless case. In Figure
\ref{fig:nu=0.05}, the numerical results for the wave number $k=0.2$,
$0.3$, $0.4$ and $0.5$ with a greater collision frequency $\nu = 0.05$
are plot. We find that $\mathcal{E}_{h}(t)$ is damping exponentially
with an even greater rate than that in the case of $\nu =0.01$. This
consists again with the empirical formula given in \cite{Filbet},
though a different collision term was used therein.

\begin{figure}[!ht]
\centering
\subfigure[$k=0.3,\nu=0.01$]{
\begin{overpic}[scale = .4]{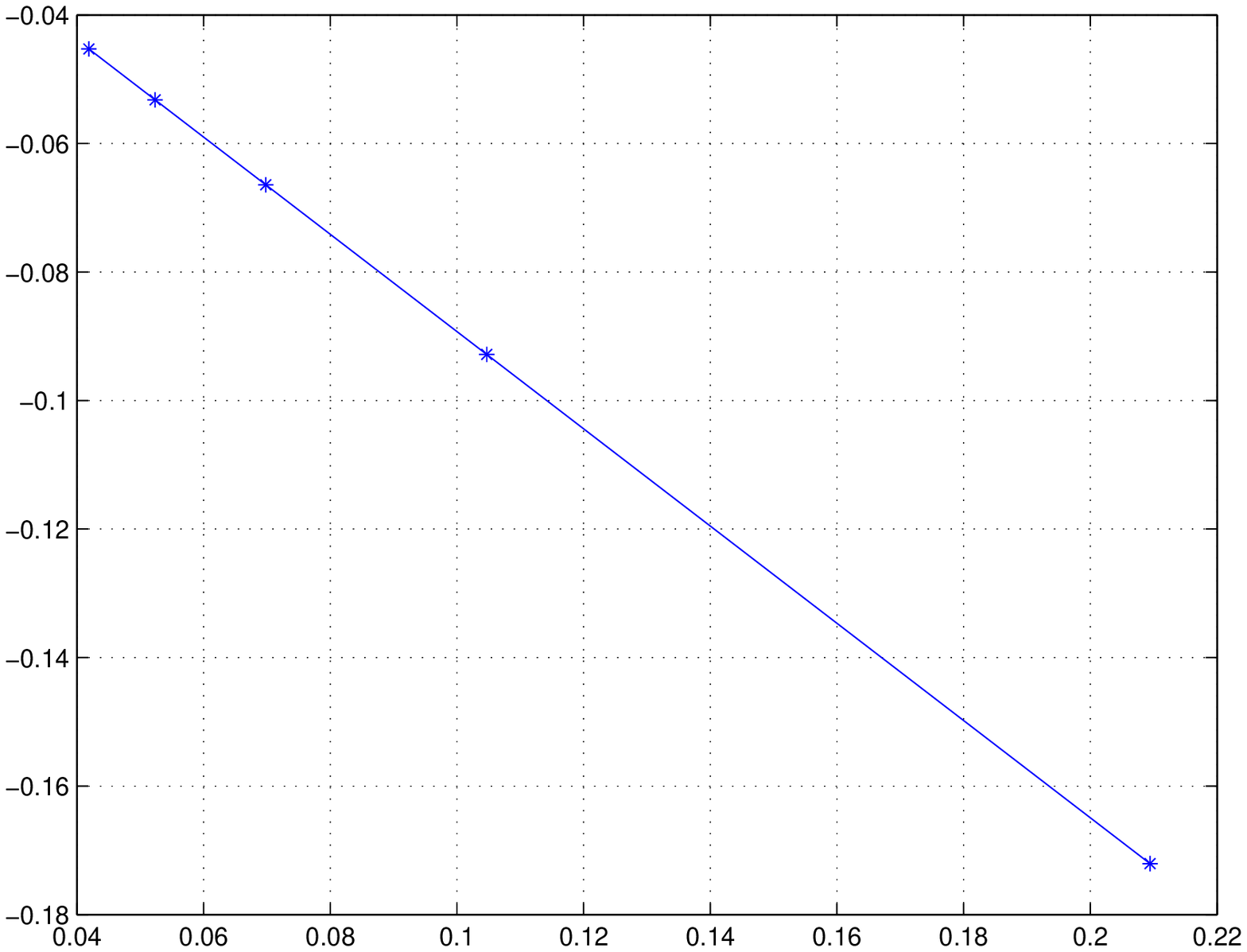}
\end{overpic}
}
\subfigure[$k=0.5,\nu=0.01$]{
\begin{overpic}[scale = .4]{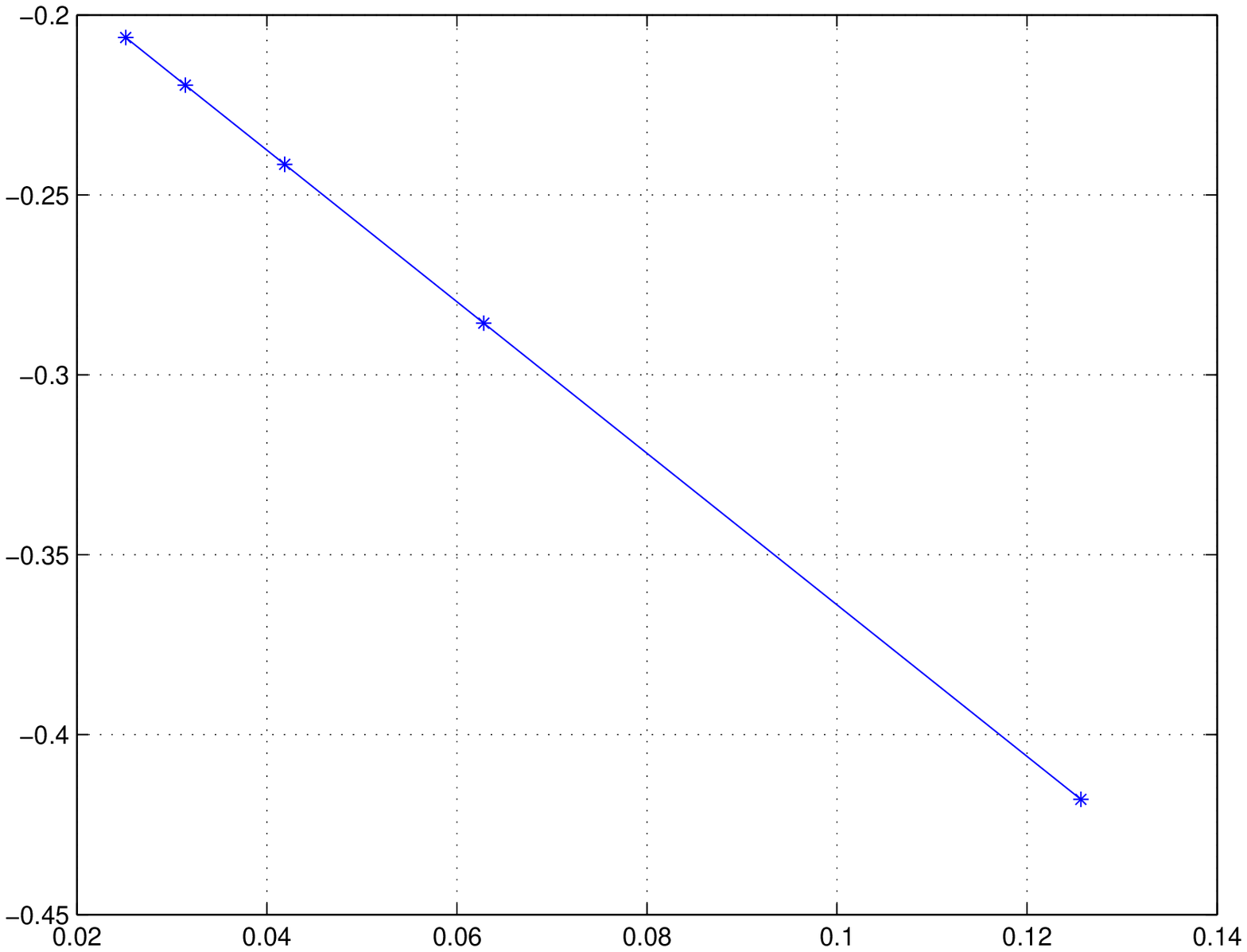}
\end{overpic} 
} \\
\subfigure[$k=0.3,\nu=0.05$]{
\begin{overpic}[scale = .4]{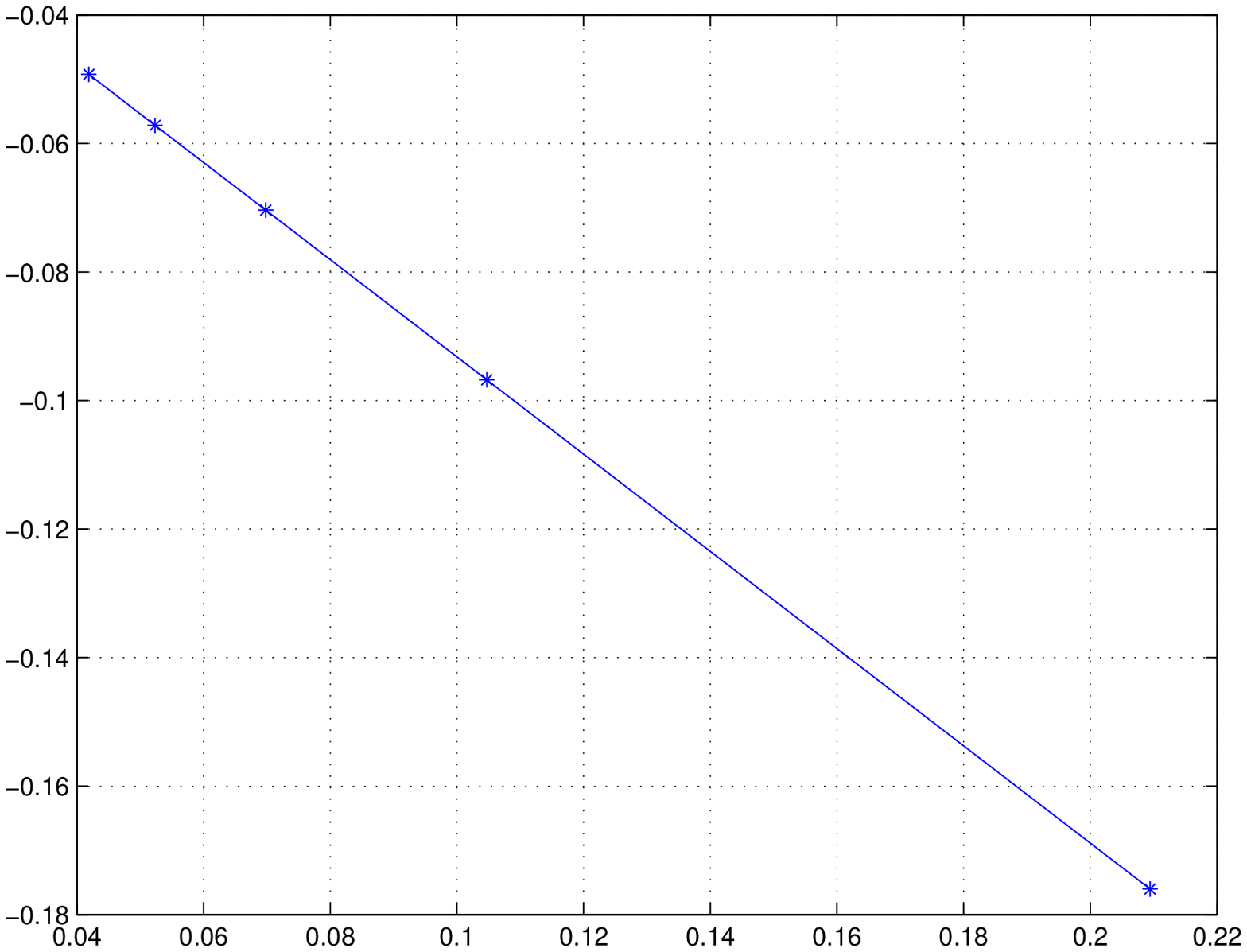}
\end{overpic}
}
\subfigure[$k=0.5,\nu=0.05$]{
\begin{overpic}[scale = .4]{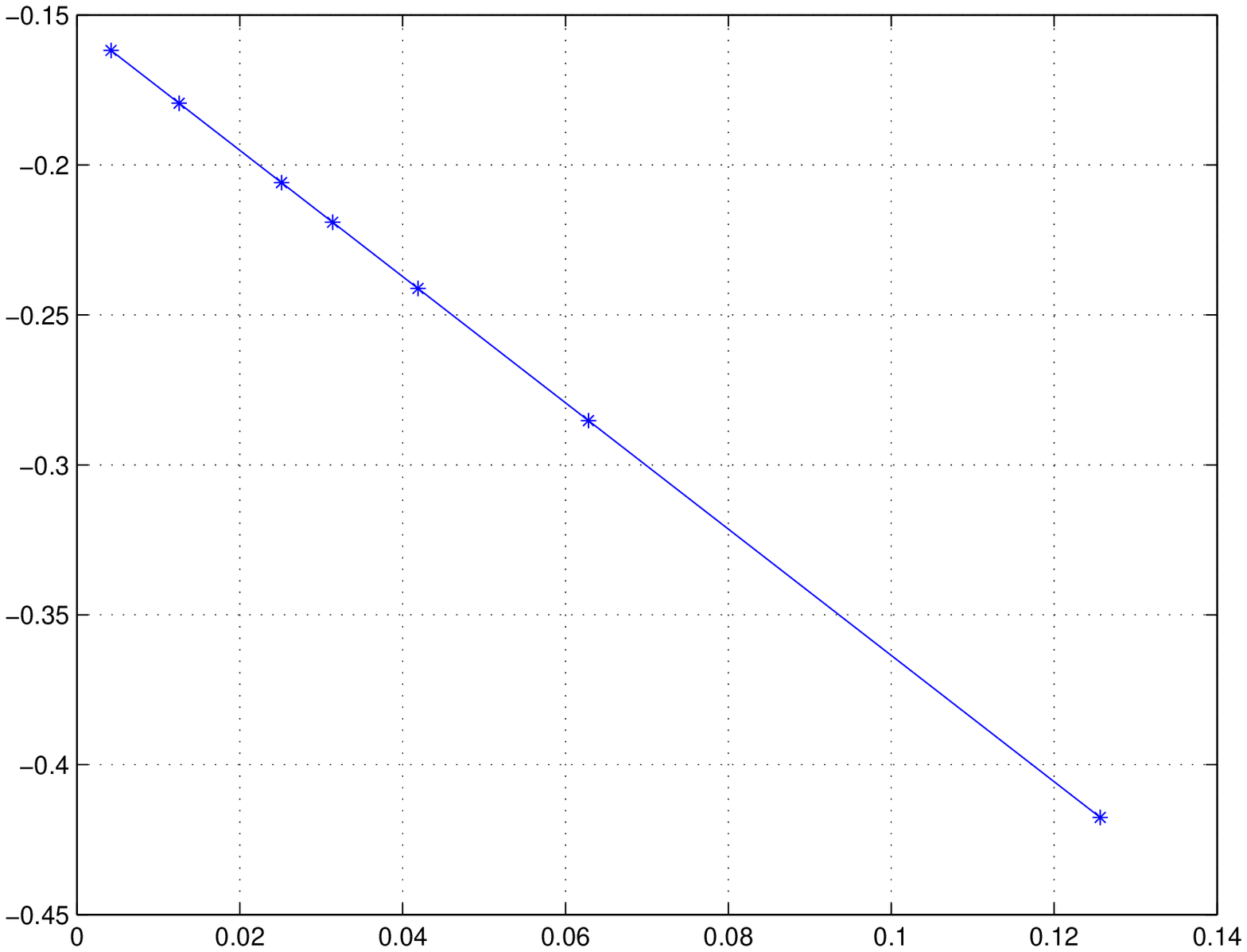}
\end{overpic}
}
\caption{The prediction of the exact damping rate using extrapolation
  \eqref{extrapolate_gamma}. The $x$-axis is the spatial grid size
  $\Delta x$ and the $y$-axis is numerical damping rates
  $\gamma_L^h(\Delta x)$.}
\label{fig:collision_slope}
\end{figure}

We numerically solve the V-B equations in 1D spatial space and 1D
velocity space with the number of moments fixed as $80$. The number of
the spatial girds is ranging form $100$ to $8000$. The least square
fitting of the peak value points is again adopted to get the numerical
damping rate of the square root of $\mathcal{E}_{h}$ for different
$\Delta x$. In Figure \ref{fig:collision_slope}, the damping rates for
the wave numbers $k=0.3$, $0.5$, and the collision frequencies $\nu =
0.01$, $0.05$ are plotted. It is clear that the numerical damping
rates are all in a linearly and monotonically converging pattern with
the spatial grid size $\Delta x$ going to zero. Remember that in the
collisionless case, the numerical damping rate is monotonically
linearly converging to the theoretic data with the spatial grid size
going to zero. Based on such similarity, it is reasonable to
conjecture that the damping rate with the collision term as the
function of the spatial grid size is monotonically linearly converging
to the exact damping rate while the spatial grid size is going to
zero, too. This inspires us to predict the exact damping rate for
different collision frequencies by the extrapolation of the numerical
damping rates, which is the same as the approach in the collisionless
case. We adopt the formula \eqref{extrapolate_gamma} again to retrieve
the parameters by least square fitting. The obtained parameter
$\gamma_L^{h,0}$ is regarded as the prediction of the exact damping
rate. In Table \ref{tab:limit_damping_rate}, we present
$\gamma_L^{h,0}$ for the wave numbers $k=0.2$, $0.3$, $0.4$, and $0.5$
with $\nu = 0.01$ and $0.05$.
\begin{table}[!ht]
  \centering
  \begin{tabular}[!ht]{|c|c|c|} \hline 
  Collision frequency $\nu$ & Wave number $k$ & $\gamma_L^{h,0}$ \\ \hline\hline
  \multirow{4}*{$0.01$} & $0.2$ &  $-8.3796\times10^{-5}$ \\ \cline{2-3} 
               & $0.3$  & $-0.01361$  \\ \cline{2-3}
               & $0.4$  & $-0.06701$  \\ \cline{2-3}
               & $0.5$  & $-0.15331$  \\ \hline\hline   
 \multirow{4}*{$0.05$} & $0.2$ &  $-0.002094$ \\ \cline{2-3} 
               & $0.3$  & $-0.01755$  \\ \cline{2-3}
               & $0.4$  & $-0.06966$  \\ \cline{2-3} 
               & $0.5$  & $-0.15230$  \\ \hline 
   \end{tabular}
   \caption{The prediction of the linear Landau damping rates.}
  \label{tab:limit_damping_rate}
\end{table}


\section{Concluding remarks}
The {\NRxx} method has been extended to solve the V-P and V-B
equations. The method is able to capture the linear Landau damping
effectively. We attempt to predict the exact Landau damping rates with
the collision term based on our observation in the collisionless case.

\section*{Acknowledgements}
This research was supported in part by the National Basic Research
Program of China (2011CB309704) and the Fok Ying Tong Education and
NCET in China.


\bibliographystyle{plain}
\bibliography{../article}

\begin{thebibliography}{10}

\bibitem{BGK}
P.~L. Bhatnagar, E.~P. Gross, and M.~Krook.
\newblock A model for collision processes in gases. {I}. small amplitude
  processes in charged and neutral one-component systems.
\newblock {\em Phys. Rev.}, 94(3):511--525, 1954.

\bibitem{BGK_1}
P.~L. Bhatnagar, E.~P. Gross, and M.~Krook.
\newblock Model for collision processes in gases: Small-amplitude oscillations
  of charged two-component systems.
\newblock {\em Phys. Rev.}, 102(3):593--604, 1956.

\bibitem{PIC}
C.~K. Birdsall and A.~B. Langdon, editors.
\newblock {\em Plasma Physics via Computer Simulation}.
\newblock Inst. of Phys. Publishing, Bristol/Philadephia, 1991.

\bibitem{Fan_new}
Z.~Cai, Y.~Fan, and R.~Li.
\newblock Globally hyperbolic regularization of {G}rad's moment system.
\newblock {\em To appear in Comm. Pure Appl. Math.}, 2012.

\bibitem{Fan}
Z.~Cai, Y.~Fan, and R.~Li.
\newblock Globally hyperbolic regularization of {G}rad's moment system in one
  dimensional space.
\newblock {\em To appear in J. Math. Sci.}, 2012.

\bibitem{NRxx}
Z.~Cai and R.~Li.
\newblock Numerical regularized moment method of arbitrary order for
  {B}oltzmann-{BGK} equation.
\newblock {\em SIAM J. Sci. Comput.}, 32(5):2875--2907, 2010.

\bibitem{Li}
Z.~Cai, R.~Li, and Z.~Qiao.
\newblock {\NRxx} simulation of microflows with {S}hakhov model.
\newblock {\em SIAM J. Sci. Comput.}, 34(1):A339--A369, 2012.

\bibitem{Cai}
Z.~Cai, R.~Li, and Y.~Wang.
\newblock An efficient {\NRxx} method for {B}oltzmann-{BGK} equation.
\newblock {\em J. Sci. Comput.}, 50(1):103--119, 2012.

\bibitem{NRxx_new}
Z.~Cai, R.~Li, and Y.~Wang.
\newblock Numerical regularized moment method for high {M}ach number flow.
\newblock {\em Commun. Comput. Phys.}, 11(5):1415--1438, 2012.

\bibitem{Filbet}
N.~Crouseilles and F.~Filbet.
\newblock Numerical approximation of collisional plasmas by high order methods.
\newblock {\em J. Comput. Phys.}, 201(2):546--572, 2004.

\bibitem{Henon}
M.~H\'{e}non.
\newblock {V}lasov equation?
\newblock {\em Astronomy and Astrophysics}, 114(1):211--212, 1984.

\bibitem{CIP}
T.~Nakamura and T.~Yabe.
\newblock Cubic interpolated propagation scheme for solving the
  hyper-dimensional {V}lasov-{P}oisson equation in phase space.
\newblock {\em Comput. Phys. Commun}, 120:122--154, 1999.

\bibitem{UTH}
J.~W. Schumer and J.~P. Holloway.
\newblock Vlasov simulation using velocity-scaled {H}ermite representations.
\newblock {\em J. Comput. Phys.}, 144(2):626--661, 1998.

\bibitem{Eric}
E.~Sonnendr{\"u}cker.
\newblock {A}pproximation num{\'e}rique des {\'e}quations de
  {V}lasov-{M}axwell.
\newblock {\em Notes du cours de M2}, 2010.

\bibitem{SL}
E.~Sonnendr{\"u}cker, J.~Roche, P.Betrand, and A.~Ghizzo.
\newblock The semi-{L}agrangian method for the numerical resolution of {V}lasov
  equations.
\newblock {\em J.Comput. Phys}, 149(2):201--220, 1998.

\bibitem{Vlasov}
A.~A. Vlasov.
\newblock On vibration properties of electron gas.
\newblock {\em J. Exp. Theor. Phys.}, 8(3):291, 1938.

\bibitem{Vojta}
G.~Vojta and M.~Mocker.
\newblock Eigenfunctions of the linearized {V}lasov-{BGK} operator.
\newblock {\em Phys. Letters A}, 30(5):303--304, 1969.

\bibitem{Mocker}
G.~Vojta and M.~Mocker.
\newblock Case formalism for sigular normal modes of the {V}lasov-{BGK}
  equation.
\newblock {\em Phys. Letters A}, 31(5):243--244, 1970.

\bibitem{Finite-element1}
S.~I. Zaki, L.~R. Gardner, and T.~J.~M. Boyd.
\newblock A finite element code for the simulation of one-dimensional {V}lasov
  plasmas {I}. theory.
\newblock {\em J. Comput. Phys.}, 79(1):184--199, 1988.

\bibitem{Finite-element2}
S.~I. Zaki, L.~R. Gardner, and T.~J.~M. Boyd.
\newblock A finite element code for the simulation of one-dimensional {V}lasov
  plasmas {II}. applications.
\newblock {\em J. Comput. Phys.}, 79(1):200--208, 1988.

\end{thebibliography}
\end{document}